\definecolor{myurlcolor}{rgb}{0,0,0.4}
\definecolor{mycitecolor}{rgb}{0,0.5,0}
\definecolor{myrefcolor}{rgb}{0.5,0,0}
\newcommand{\lt}{<}
\newcommand{\Z}{\mathbb{Z}}
\newcommand{\N}{\mathbb{N}}
\newcommand{\Q}{\mathbb{Q}}
\newcommand{\R}{\mathbb{R}}
\newcommand{\ra}{\rightarrow}
\newcommand{\eq}[1]{~(\ref{#1})}
\renewcommand{\equiv}{\stackrel{\mathrm{def}}{=}}
\newcommand{\eps}{\varepsilon}
\theoremstyle{plain}
\newtheorem{thm}{Theorem}
\newtheorem{lem}[thm]{Lemma}
\newtheorem{prop}[thm]{Proposition}
\newtheorem{cor}[thm]{Corollary}
\newtheorem{qstn}[thm]{Question}
\newtheorem{defn}[thm]{Definition}
\theoremstyle{definition}
\newtheorem{ex}[thm]{Example}
\theoremstyle{remark}
\newtheorem{rem}[thm]{Remark}
\newtheorem{prob}[thm]{Problem}
\newcommand{\beq}{\begin{equation}}
\newcommand{\eeq}{\end{equation}}
\begin{document}



\title[Velocity Polytopes]{Velocity Polytopes of Periodic Graphs\\[.1cm] and a No-Go Theorem for Digital Physics}

\author{}
\address{Tobias Fritz\\ Perimeter Institute for Theoretical Physics}
\email{tfritz@perimeterinstitute.ca}

\keywords{Periodic graph, periodic net; voltage graph, gain graph; cycles in graphs; digital physics} 

\subjclass[2010]{Primary: 05C38, 05C22; Secondary: 52C07, 68R10}

\thanks{\textit{Acknowledgements.} This work grew out of a discussion with Giacomo Mauro D'Ariano. Thanks also to the organizers and participants of \emph{Experimental search for quantum gravity: the hard facts}, where this work has been discussed. Furthermore, an anonymous referee has provided very valuable feedback. Research at Perimeter Institute is supported by the Government of Canada through Industry Canada and by the Province of Ontario through the Ministry of Economic Development and Innovation. The author was previously supported by the EU STREP QCS}

\begin{abstract}
A periodic graph in dimension $d$ is a directed graph with a free action of $\Z^d$ with only finitely many orbits. It can conveniently be represented in terms of an associated finite graph with weights in $\Z^d$, corresponding to a $\Z^d$-bundle with connection. Here we use the weight sums along cycles in this associated graph to construct a certain polytope in $\R^d$, which we regard as a geometrical invariant associated to the periodic graph. It is the unit ball of a norm on $\R^d$ describing the large-scale geometry of the graph. It has a physical interpretation as the set of attainable velocities of a particle on the graph which can hop along one edge per timestep. Since a polytope necessarily has distinguished directions, there is no periodic graph for which this velocity set is isotropic. In the context of classical physics, this can be viewed as a no-go theorem for the emergence of an isotropic space from a discrete structure.
\end{abstract}

\maketitle

\subsection*{Corrigendum.}

As was kindly pointed out to the author by Toshikazu Sunada and Davide Proserpio, the ``velocity polytopes'' introduced in this paper are precisely the ``cycle figures'' of Eon~\cite{Eon2} and have also appeared in earlier works of Kotani and Sunada on random walks on periodic graphs~\cite{KS,KS2}; in particular, they coincide with the polytope $\mathcal{D}$ of~\cite{KS2} as characterized by Proposition 1.1 and Theorem 1.2, which imply our Proposition~\ref{lsgeometry}. Kotani and Sunada also make use of the language of homology theory~\cite{Hatcher}, which seems more appropriate than our terminology.

\section{Introduction}
\label{introduction}

Periodic graphs are abstractions of the atomic structure of crystals. A crystal, by definition, is a material whose structure consists of a finite-size pattern which repeats periodically in all spatial directions. Taking the crystal atoms as the vertices of a graph and the chemical bonds as its edges, one obtains a graph which repeats periodically in all spatial directions: a \emph{periodic graph}. This graph represents the chemical structure of the crystal. Therefore, the problem of classifying and enumerating all periodic graphs in three dimensions is of fundamental importance for crystallography~\cite{BP,DK,Eon,Thimm}. Periodic graphs have also been studied in operations research~\cite{HW}, spectral graph theory~\cite{Collatz}, and computer science~\cite{CM}. Certain generalizations of periodic graphs also appear in topological graph theory~\cite{GT,GTbook,Zas}.

Besides their natural appearance in all these fields, structures akin to periodic graphs have recently also been suggested as candidates for the fundamental microscopic building blocks of space. They feature prominently in \emph{quantum graphity}~\cite{Qgraphity}, related approaches to fundamental physics based on condensed matter models~\cite{BHmodel}, and in recent attempts to describe our world as a computation on a quantum computer~\cite{A1,A2}. In order to get some idea of how physics on a periodic graph can look like, let us consider a classical point particle moving on a periodic graph $\Gamma\subseteq\R^d$ as follows: the particle moves along the vertices of $\Gamma$ in discrete timesteps by hopping along one edge per timestep. More precisely, we define a \emph{trajectory} to be a sequence $(f_n)_{n\in\N}$ of vertices $f_n\in \Gamma$ such that for each time $n\in\N$, the positions $f_n$ and $f_{n+1}$ are adjacent in $\Gamma$. A vector $u\in\R^d$ is then called a \emph{velocity vector} of $\Gamma$ if there is a trajectory $f$ such that
\beq
\label{v}
u=\lim_{n \to \infty} \frac{f_n-f_0}{n} \:.
\eeq
Intuitively, this equation means that the trajectory's apparent velocity on the macroscopic scale is given by $u$. It is the trajectory's velocity as seen by a macroscopic observer who is not aware of the fundamental discreteness of $\Gamma$ and perceives space as a continuum $\R^d$. Note that\eq{v} only makes sense when the limit on the right-hand side exists, which can be interpreted as requiring the trajectory to have a well-defined constant macroscopic velocity, as required by Newton's first law. We will make~(\ref{v}) precise in a way which does not require the graph to be embedded in Euclidean space; our notion of velocity is completely abstract and combinatorial, but nevertheless accurately represents the usual concept. 

Now a natural question is: given the periodic graph $\Gamma\in\R^d$, what is the set of its velocity vectors? In particular, can this set be a Euclidean ball, thereby making the macroscopic observer perceive an isotropic space, such that the achievable absolute values $||u||$ do not depend on the direction $u/||u||$? This would be a very desirable property for the kind of models discussed e.g. in~\cite{A1,A2}.

Using concepts from the theory of periodic graphs, we will prove in Theorem~\ref{mainthm} that the set of velocity vectors of any suitably connected periodic graph $\Gamma$ is a convex polytope in $\R^d$. In particular, it never is a Euclidean ball, and the set of achievable velocity vectors cannot be isotropic; see Theorem~\ref{nogo}. This is a no-go theorem for the emergence of an isotropic space from a discrete structure within the context of classical physics. It extends the \emph{tile argument} of Weyl~\cite{Weyl,VB}. The reader only interested in this physics aspect may directly proceed to Section~\ref{physics}.

From the mathematical point of view, our \emph{velocity polytopes} are new invariants of periodic graphs. As witnessed by Proposition~\ref{lsgeometry}, they encode the periodic graph's large-scale geometry. The velocity polytope as an invariant can be applied for example as in Corollary~\ref{morphapp}, which is a criterion for proving the non-existence of translation-invariant maps between periodic graphs (``morphisms'').

\section{Preliminaries}

In this section, we collect some definitions and simple observations. Although most of the relevant literature is concerned with the case of undirected graphs~\cite{BP,CHK,CM}, we work with directed graphs, which is more general and has turned out to be technically more convenient. 

If $A$ is a finite set, we write $|A|$ for its cardinality.

\subsection{Graphs and paths.}
For us, a graph is a directed graph which may have loops and multiple edges. A graph $G$ is specified by a vertex set $V_G$, an edge set $E_G$, a source function $s_G:E_G\ra V_G$, and a target function $t_G:E_G\ra V_G$. We refrain from identifying an edge $e$ with the vertex pair $(s_G(e),t_G(e))$, since there may be several edges between $s_G(e)$ and $t_G(e)$. When the graph $G$ is clear from the context, we frequently omit the subscripts and simply write $s,t:E\ra V$ for the source and target maps in order to avoid unnecessary cluttering.

A \emph{path} $p$ in $G$ is a finite sequence of edges $p=e_1\ldots e_n$, $e_i\in E$, such that $t(e_i)=s(e_{i+1})$ for all $i=1,\ldots,n-1$. The \emph{length} $|p|=n$ of $p$ is its number of edges. The empty path $\emptyset$ is the unique path of length $0$. 

A \emph{closed path} is a non-empty path $p=e_1\ldots e_n$ such that $s(e_1)=t(e_n)$. If no other additional vertex repetitions occur, then $p$ is also called a \emph{cycle}. By this definition, a cycle of length $n$ traverses $n$ distinct vertices. 

If $p=e_1\ldots e_n$ and $p'=e'_1\ldots e'_m$ are paths such that $t(e_n)=s(e'_1)$, then $p$ and $p'$ can be composed to a new path
$$
pp'\equiv e_1\ldots e_n e'_1\ldots e'_m \:.
$$
It is clear that $|pp'| = |p| + |p'|$.

A directed graph is said to be \emph{strongly connected} if there is a path from $v$ to $w$ for any two vertices $v,w\in V$.

\begin{lem}
\label{fingraph}
Let $G$ be a finite graph.
\begin{enumerate}
\item\label{pathcycle} A path $p=e_1\ldots e_n$ in $G$ of length $n\geq |V_G|$ contains a cycle: there are indices $k$ and $l$ such that $c=e_{k}e_{k+1}\ldots e_{l}$ is a cycle.
\item\label{fincycles} There are only a finite number of cycles in $G$.
\end{enumerate}
\end{lem}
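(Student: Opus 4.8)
Both parts are standard pigeonhole-style arguments about finite graphs, so the plan is elementary and the main work is just bookkeeping about where cycles sit inside a path.

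For part~\ref{pathcycle}, I would track the sequence of vertices visited by the path. A path $p = e_1\ldots e_n$ of length $n$ determines $n+1$ vertices $v_0 = s(e_1)$, $v_i = t(e_i)$ for $i = 1,\ldots,n$, with the matching condition $t(e_i) = s(e_{i+1})$ ensuring these are consistently defined. Among $v_0,\ldots,v_n$ there are $n+1 \geq |V_G| + 1$ entries taking values in the $|V_G|$-element set $V_G$, so by the pigeonhole principle two of them coincide, say $v_{k-1} = v_l$ with $k-1 < l$. Then $c = e_k e_{k+1}\ldots e_l$ satisfies $s(e_k) = v_{k-1} = v_l = t(e_l)$, so it is a closed path. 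To upgrade it to a cycle, I would pick, among all pairs of repeated vertices on $p$, one minimizing $l - (k-1)$; minimality forces the intermediate vertices $v_{k-1}, v_k, \ldots, v_{l-1}$ to be pairwise distinct, which is exactly the no-other-repetition condition in the definition of a cycle. (A small remark is needed that $c$ is non-empty, i.e. $l \geq k$, which holds since $l > k-1$.)

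For part~\ref{fincycles}, the cleanest route is to bound the length of any cycle and then bound the number of short paths. By the definition given in the paper, a cycle of length $n$ traverses $n$ distinct vertices, so immediately $n \leq |V_G|$; in particular every cycle is a path of length at most $|V_G|$. The number of paths of a given length $m$ is at most $|E_G|^m$, since a path is a sequence of $m$ edges (not every such sequence is a path, but this over-counts harmlessly). Summing over $m = 1, \ldots, |V_G|$ gives a finite bound $\sum_{m=1}^{|V_G|} |E_G|^m$ on the total number of cycles, using that $G$ is finite so both $|V_G|$ and $|E_G|$ are finite.

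I do not anticipate a genuine obstacle here; the only points demanding slight care are getting the index arithmetic right in part~\ref{pathcycle} (the off-by-one between the $n$ edges and the $n+1$ vertices, and checking that the minimal-length closed subpath is genuinely a cycle under the paper's convention rather than merely a closed path) and making sure the argument for part~\ref{fincycles} uses the paper's specific definition of cycle — namely that distinct vertices are traversed — rather than the weaker notion of closed path, for which no length bound exists.
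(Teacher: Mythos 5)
Your proposal is correct and follows essentially the same route as the paper: pigeonhole on the $n+1$ visited vertices plus a minimality argument to extract a genuine cycle for part~\ref{pathcycle}, and the bound $|c|\leq|V_G|$ combined with the finiteness of the number of paths of bounded length for part~\ref{fincycles}. Your write-up is somewhat more explicit about the index bookkeeping and about why minimality rules out intermediate repetitions, but the underlying argument is identical.
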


\begin{proof}
\begin{enumerate}
\item By the pigeonhole principle, there have to be indices $l>k$ with $t(e_l)=s(e_k)$. Choosing $l$ minimal with this property guarantees $c=e_{k}e_{k+1}\ldots e_{l}$ to be a cycle.
\item Since a cycle is defined as not having any vertex repetitions besides the coincidence between the initial and the final vertex, a cycle can have length at most $|V_G|$. The conclusion follows since there are only a finite number of paths of length at most $|V_G|$.
\end{enumerate}
\end{proof}

\subsection{Definition of periodic graphs.}
\label{depper}

We now turn to the formal definition of periodic graphs before discussing their representation by finite weighted graphs. We refer to Figure~\ref{figure} for a basic two-dimensional example and to~\cite{BP} for abundant visualizations of three-dimensional periodic graphs within the context of crsytallography.

\begin{defn}[periodic graph]
\label{pg}
A $d$-dimensional periodic graph is a graph $\Gamma$ equipped with a free action of the free abelian group $\Z^d$ on $\Gamma$, such that $\Gamma$ has only finitely many $\Z^d$-orbits of vertices as well as edges.
\end{defn}

Let us disentangle what this definition means. First of all, the graph $\Gamma$ comes with an action of the group $\Z^d$. In additive notation, this means that there are given maps\\[-.2cm]
\begin{align*}
\begin{split}
V_\Gamma &\times\Z^d \rightarrow V_\Gamma , \quad (v,x)\mapsto v+x \:, \\[.2cm]
E_\Gamma &\times\Z^d \rightarrow E_\Gamma , \quad (e,x)\mapsto e+x \:. \\[.2cm]
\end{split}
\end{align*}
We think of the vertex $v+x$ as the vertex $v$ translated by the vector $x\in\Z^d$, and similarly for $e+x$. In order for these maps to form a $\Z^d$-action on $\Gamma$, they need to satisfy the group action axioms \\[-.2cm]
\begin{align}
\label{action}
\begin{split}
 (v+x)+y=v+(x+y) \quad\forall v\in V_\Gamma ,\: x,y\in\Z^d \:, \qquad v+0 &= v \quad\forall v\in V_\Gamma  \:, \\[.2cm]
 (e+x)+y=e+(x+y) \quad\forall e\in E_\Gamma ,\: x,y\in\Z^d \:, \qquad e+0 &= e \quad\forall e\in E_\Gamma  \:, \\[.2cm]
\end{split}
\end{align}
as well as be compatible with each other in the sense that source and target of the translate of an edge are precisely the translates of the source and target of the edge,\\[-.2cm]
\begin{align}
\label{comp}
\begin{split}
 & s_\Gamma (e+x)=s_\Gamma (e)+x\:, \\[.2cm]
 & t_\Gamma (e+x)=t_\Gamma (e)+x\:. \\[.2cm]
\end{split}
\end{align}
Furthermore, the action of $\Z^d$ on $\Gamma$ should be free,
\beq
\label{free}
v+x=v+x' \:\:\Longrightarrow\:\: x=x' \:\:,\qquad e+x=e+x' \:\:\Longrightarrow\:\: x=x' \:.
\eeq
Finally, there should be only finitely many orbits in $V_\Gamma $ as well as in $E_\Gamma $ under the $\Z^d$-action. This is an abstraction of the crystallographic property that a unit cell of a crystal contains only finitely atoms and chemical bonds.

\begin{rem}
\label{affine}
Definition~\ref{pg} is of an abstract combinatorial nature in the sense that no embedding of $\Gamma$ into $\R^d$ is required. However, suppose that $\Gamma\subseteq\R^d$ is a concretely embedded graph which is translation-invariant in $d$ linearly independent directions. After applying an appropriate affine transformation to $\Gamma$, the unit cell of $\Gamma$ can be taken to be the unit cube $[0,1]^d$, which means that $\Gamma$ is translation-invariant under the group of integer translations $\Z^d\subseteq\R^d$. With $\Z^d$ acting on $\Gamma$ by these translations, $\Gamma$ is a $d$-dimensional periodic graph in the sense of Definition~\ref{pg}.
\end{rem}

There are other uses of the term ``periodic graph'' in the mathematical literature which are not related to the one used here. For example, a plot of a periodic function is a ``periodic graph'' in a completely different sense. For another interesting notion of periodic graph which is in no way related to the present one, see~\cite{Godsil}.

\subsection{The displacement graph of a periodic graph.}
\label{dissec}

By the freeness condition\eq{free}, a non-empty periodic graph is necessarily infinite. A convenient representation of a periodic graph in terms of a finite amount of data has been developed in a more general context in~\cite{Gross,GT}, and probably independently in~\cite{CHK}. Due to the diversity of the literature spanning various fields of science, no universal terminology has been established. Here we partly try to follow the terminology of graph theory~\cite{GTbook}. While this subsection contains standard material, we try to offer a slightly different point of view emphasizing the analogy to covering spaces~\cite[Ch.~1.3]{Hatcher}.

Given a group action on some mathematical object, it is natural to consider the quotient object with respect to the group action. For a periodic graph $\Gamma$, this means to identify two vertices (or edges) if they can be translated into each other by a group element $x\in\Z^d$; in other words, if the two vertices (edges) lie in the same $\Z^d$-orbit. The resulting collection of vertex orbits $V_{\Gamma/\Z^d}\equiv V_\Gamma/\Z^d$ and the collection of edge orbits $E_{\Gamma/\Z^d}\equiv E_\Gamma/\Z^d$ form a quotient graph $\Gamma/\Z^d$: by\eq{comp}, the source and target functions $s_\Gamma$ and $t_\Gamma$ descend to well-defined maps
$$
s_{\Gamma/\Z^d},\, t_{\Gamma/\Z^d} \: : \: E_{\Gamma/\Z^d}\longrightarrow V_{\Gamma/\Z^d} \:.
$$
By the finiteness assumption of Definition~\ref{pg}, the quotient graph $\Gamma/\Z^d$ is finite.

\begin{rem}
In the particular case that $\Gamma\subseteq\R^d$ is a translation-invariant Euclidean periodic graph, then one can construct $\Gamma/\Z^d$ also by taking the vertices and edges in a unit cell of $\Gamma$. Besides the edges inside the unit cell, each edge of $\Gamma$ which connects a vertex inside the unit cell to a vertex outside the unit cell defines an additional edge of $\Gamma/\Z^d$ by changing the target vertex to its translate inside the unit cell.
\end{rem}

By definition of $\Gamma/\Z^d$, there is a canonical projection map $\phi_\Gamma:\Gamma\ra \Gamma/\Z^d$ which maps every vertex and every edge to its $\Z^d$-orbit. When $\Gamma$ is clear from the context, we also simply write $\phi$ for $\phi_\Gamma$.

The map $\phi$ enjoys the nice property that an edge (or a path) in $\Gamma/\Z^d$ can be uniquely lifted to an edge (a path) in $\Gamma$, given that a starting vertex has been specified:

\begin{lem}
\label{lift}
\begin{enumerate}
\item\label{liftedge} For every $e\in E_{\Gamma/\Z^d}$ and every $v^*\in \phi^{-1}(s(e))$, there is a unique $e^*\in E_\Gamma$ with $\phi(e^*)=e$ and $s(e^*)=v^*$.
\item\label{liftpath} For every path $p=e_1\ldots e_n$ in $\Gamma/\Z^d$ and every $v^*\in \phi^{-1}(s(e_1))$, there is a unique path $p^*=e_1^*\ldots e_n^*$ in $\Gamma$ with $\phi(p^*)=p$ and $s(e_1^*)=v^*$.
\end{enumerate}
\end{lem}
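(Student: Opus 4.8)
The plan is to prove both parts by exploiting the freeness of the $\Z^d$-action, which makes $\phi$ behave like a covering map; this is the combinatorial analogue of the unique path lifting property from covering space theory~\cite[Ch.~1.3]{Hatcher}.

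For part~\eqref{liftedge}, I would begin by choosing an arbitrary representative $\tilde e\in E_\Gamma$ of the orbit $e$, i.e.\ with $\phi(\tilde e)=e$. Then $s(\tilde e)$ and the given vertex $v^*$ both lie in the orbit $s(e)$, so there is some $x\in\Z^d$ with $v^*=s(\tilde e)+x$, and by freeness~\eqref{free} this $x$ is unique. Setting $e^*\equiv\tilde e+x$ and using the compatibility~\eqref{comp} of the action with the source map gives $s(e^*)=s(\tilde e)+x=v^*$, while $\phi(e^*)=\phi(\tilde e)=e$ since an edge and its translates lie in the same orbit; this establishes existence. For uniqueness, if $e^*_1$ and $e^*_2$ both satisfy $\phi(e^*_i)=e$ and $s(e^*_i)=v^*$, then they lie in the same orbit, so $e^*_2=e^*_1+y$ for some $y\in\Z^d$, whence $v^*=s(e^*_2)=s(e^*_1)+y=v^*+y$, and freeness forces $y=0$, so $e^*_1=e^*_2$.

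For part~\eqref{liftpath}, I would induct on the length $n$. The case $n=0$ is the empty path, and $n=1$ is exactly part~\eqref{liftedge}. For the inductive step, given $p=e_1\ldots e_n$ and $v^*\in\phi^{-1}(s(e_1))$, use part~\eqref{liftedge} to lift $e_1$ to the unique $e_1^*$ with $s(e_1^*)=v^*$. Since the source and target maps on $\Gamma/\Z^d$ are by construction the descents of those on $\Gamma$, we have $\phi(t(e_1^*))=t(e_1)=s(e_2)$, so $t(e_1^*)\in\phi^{-1}(s(e_2))$ and the inductive hypothesis applies to the shorter path $e_2\ldots e_n$ with starting vertex $t(e_1^*)$, yielding a unique lift $e_2^*\ldots e_n^*$. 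Concatenating gives the desired $p^*=e_1^*e_2^*\ldots e_n^*$, which is a genuine path because $t(e_1^*)=s(e_2^*)$, and it satisfies $\phi(p^*)=p$ and $s(e_1^*)=v^*$. Uniqueness follows by the same induction: any lift $q^*=f_1^*\ldots f_n^*$ of $p$ starting at $v^*$ must have $f_1^*=e_1^*$ by the uniqueness in part~\eqref{liftedge}, and then $f_2^*\ldots f_n^*=e_2^*\ldots e_n^*$ by the inductive hypothesis, so $q^*=p^*$.

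I do not expect any serious obstacle here. The only points requiring care are the consistent use of freeness to pin down the translation vector $x$ uniquely at each stage, and the observation that $\phi$ intertwines the source and target maps --- but the latter is immediate from the way $s_{\Gamma/\Z^d}$ and $t_{\Gamma/\Z^d}$ were defined.
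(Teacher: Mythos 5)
Your proof is correct and follows essentially the same route as the paper's: part~\eqref{liftedge} via choosing a representative, using freeness to pin down the unique translation $x$, and compatibility~\eqref{comp} to verify the source condition; part~\eqref{liftpath} by iterating part~\eqref{liftedge} along the path (the paper states this in one line, you merely spell out the induction). No gaps.
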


\begin{proof}
\begin{enumerate}
\item Let $\widetilde{e}\in E_\Gamma$ be some edge with $\phi(\widetilde{e})=e$. Then there is a unique $x\in\Z^d$ such that $s(\widetilde{e})+x=v^*$. Hence, $e^*\equiv \widetilde{e}+x$ has the desired properties. For uniqueness, suppose that $e'\in E_\Gamma$ would also satisfy $\phi(e')=e$ and $s(e')=v^*$. By definition of $\phi$, the relation $\phi(e^*)=e=\phi(e')$ means that $e^*$ and $e'$ lie in the same $\Z^d$-orbit, so that there exists $y\in\Z^d$ with $e'=e^*+y$. But then, $v^*=s(e')=s(e^*)+y=v^*+y$, which implies $y=0$ by\eq{free}. Hence $e'=e^*$.
\item This follows from a successive application of part~\ref{liftedge} to each edge in the path.
\end{enumerate}
\end{proof}

In the language of graph theory, we have found that the projection $\phi:\Gamma\ra \Gamma/\Z^d$ is a \emph{covering of graphs}~\cite[Ch.\,2]{GTbook},~\cite[Ch.\,17]{Biggs}. This is completely analogous to the notion of \emph{covering space} in topology~\cite[Ch.\,1.3]{Hatcher}.

Unfortunately, knowing the quotient graph $\Gamma/\Z^d$ alone is not enough to reconstruct $\Gamma$. For example, there are many Euclidean periodic graphs $\Gamma\subseteq\R^d$ which contain only a single vertex per unit cell, so that $|V_{\Gamma/\Z^d}|=1$. In this case, all edges of $\Gamma/\Z^d$ are loops. However, knowing $|E_{\Gamma/\Z^d}|$ as the number of these loops is certainly not enough to recover $\Gamma$: for example, it is unclear whether a loop of $\Gamma/\Z^d$ comes from an orbit of loops in $\Gamma$, or whether it represents a class of edges connecting different vertices in $\Gamma$.

The additional piece of data needed in order to recover $\Gamma$ turns out to consist of edge weights on $\Gamma/\Z^d$ with values in $\Z^d$, which specify, intuitively speaking, the translation required in going from $s_\Gamma(e)$ to $t_\Gamma(e)$. These edge weights are known under various names --- e.g. \emph{voltage assignments}~\cite{Gross,GT}, \emph{labels}~\cite{CHK,Klee}, or simply \emph{weights}~\cite{CM}. We will prefer the term \emph{displacements}, which we deem most appropriate given the geometric intuition. The formalism described in the following is, in effect, gauge theory for the group $\Z^d$~\cite{gauge}.

Defining displacements requires that one has chosen a vertex representative $\iota(v)\in V_\Gamma$ for every orbit $v\in V_{\Gamma/\Z^d}$. In other words, we fix a map $\iota:V_{\Gamma/\Z^d}\to V_\Gamma$ which is assumed to be a section of $\phi:V_\Gamma\to V_{\Gamma/\Z^d}$. For example for a Euclidean periodic graph $\Gamma\subseteq\R^d$, one possibility is to define $\iota$ by choosing a unit cell and mapping every orbit $v\in V_{\Gamma/\Z^d}$ to its representative in the unit cell. 

Every $v\in V_{\Gamma/\Z^d}$ represents a whole $\Z^d$ worth of vertices of $\Gamma$, namely $\phi^{-1}(v)$. Defining $\iota(v)$ means fixing an origin in $\phi^{-1}(v)$, in the sense that it gives the concrete identification of this $\phi^{-1}(v)$ with $\Z^d$ via
$$
\alpha_v \::\: \Z^d \stackrel{\cong}{\longrightarrow} \phi^{-1}(v) \:,\quad x\mapsto \iota(v)+x \:.
$$
This map is compatible with translations in the sense that it satisfies the identity $\alpha_v(x+y)=\alpha_v(x)+y$ for all $x,y\in\Z^d$.

Now for an edge orbit $e\in E_{\Gamma/\Z^d}$, Lemma~\ref{lift} provides a bijection $\phi^{-1}(s(e))\stackrel{\cong}{\ra}\phi^{-1}(t(e))$, which is also compatible with translations. In total, we obtain an isomorphism
\beq
\label{defgamma}
\xymatrix{\gamma_e  \:\:: \:\: \Z^d \ar[r]^{\cong}_{\alpha_{s(e)}} & \phi^{-1}(s(e)) \ar[r]^{\cong} & \phi^{-1}(t(e)) \ar[r]^(.6){\cong}_(.6){\alpha_{t(e)}^{-1}} & \Z^d}
\eeq
which is again compatible with translations, $\gamma_e(x+y)=\gamma_e(x)+y$. Therefore, $\gamma_e(x)=\gamma_e(0)+x$. We now define the displacement along $e$ to be $\delta(e)\equiv\gamma_e(0)\in\Z^d$. The equation $\gamma_e(0)=\delta(e)$ expresses the intuition that $\delta(e)$ is the physical displacement required in going from $s_\Gamma(e')$ to $t_\Gamma(e')$ for any $e'\in\phi^{-1}(e)$.

\begin{rem}
In gauge theory terms, $\phi:\Gamma\to\Gamma/\Z^d$ is a $\Z^d$-\emph{principal bundle}, the section $\iota:\Gamma/\Z^d\to\Gamma$ fixes a \emph{trivialization}, and the map $\delta : \Gamma/\Z^d\to \Z^d$ defines a $\Z^d$-\emph{connection} on $\Gamma/\Z^d$.
\end{rem}

It is not difficult to see that the quotient graph $\Gamma/\Z^d$ together with the displacement function $\delta:E_{\Gamma/\Z^d}\ra\Z^d$ is sufficient to recover $\Gamma$. This works as in the following definition, which can also be regarded as a general scheme for constructing periodic graphs:

\begin{defn}[{\cite{GT,CHK,CM}}]
\label{depgraph}
A \emph{displacement graph} $(G,\delta)$ is a finite graph $G$ together with edge weights $\delta:E_G \to \Z^d$ (the \emph{displacements}). Associated to $(G,\delta)$ is a periodic graph $\widetilde{G}$ given by \\[-.2cm]
\begin{align*}
V_{\widetilde{G}} =&\left\{(v,x) \:|\: v\in V_G,\, x\in\Z^d \right\} \\[.2cm]
E_{\widetilde{G}} =&\left\{(e,x) \:|\: e\in E_G,\, x\in\Z^d \right\} \\[.4cm]
s_{\widetilde{G}} ((e,x)) =  (s_G(e)&, x) \:,\qquad t_{\widetilde{G}} ((e,x)) = (t_G(e),x+\delta(e))
\end{align*}
\end{defn}

\begin{rem}
\label{int}
Intuitively, $\widetilde{G}$ is constructed from $G$ as follows: we start with the lattice $\Z^d$ and place at each point a copy of the vertex set $V_G$. Each edge $e\in E_G$ defines a $\Z^d$ worth of edges between the copies of $s_G(e)$ and $t_G(e)$, where in adding these edges we have to apply a translation by $\delta(e)$ in the ambient $\Z^d$.
\end{rem}

One needs to keep in mind that determining a displacement graph from a periodic graph $\Gamma$ requires choosing a representative $\iota(v)\in V_\Gamma$ for each $\Z^d$-orbit $v\in V_{\Gamma/\Z^d}$. How much does the displacement graph depend on the choice of $\iota$? Suppose we are given two different choices $\iota,\iota':V_{\Gamma/\Z^d}\to V_\Gamma$. Then for any $v\in V_{\Gamma/\Z^d}$, there is a unique $g(v)\in\Z^d$ such that
$$
\iota'(v) = \iota(v)+g(v) \quad \forall v\in V_{\Gamma/\Z^d} \:.
$$
By\eq{defgamma}, the corresponding displacements $\delta$ and $\delta'$ therefore differ by
$$
\delta'(e) = \delta(e) + g(s(e)) - g(t(e)) \quad \forall e\in E_{\Gamma/\Z^d} \:.
$$

\begin{rem}
In gauge theory terms, this equation corresponds to conducting a \emph{gauge transformation}.
\end{rem}

\begin{ex}
\label{hc1}
Consider the periodic graph illustrated in Figure~\ref{honeycomb}. Any one of the elementary parallelograms formed by the dashed lines can be taken as a unit cell. Choosing the two vertices inside such a unit cell defines $\iota$ in terms of a representative of the set of ``\begin{tikzpicture}\node[circle,inner sep=2pt,fill=black,draw]{};\end{tikzpicture}'' vertices which form a $\Z^d$-orbit, and a representative of the set of ``\begin{tikzpicture}\node[rectangle,inner sep=2pt,fill=black,draw]{};\end{tikzpicture}'' vertices which form another $\Z^d$-orbit. The associated displacement graph is shown in Figure~\ref{hcdg}. One obtains the displacements of e.g.~the edges going from \begin{tikzpicture}\node[circle,inner sep=2pt,fill=black,draw]{};\end{tikzpicture} to \begin{tikzpicture}\node[rectangle,inner sep=2pt,fill=black,draw]{};\end{tikzpicture} by noting that there are three $\Z^d$-orbits of such edges in Figure~\ref{honeycomb}: one which stays inside the unit cell, corresponding to the displacement $(0,0)$; one whose target is one cell away in the positive $y$-direction, having displacement $(0,1)$; and one whose target is one cell away in the negative $x$-direction with displacement $(-1,0)$.

If one started instead with the displacement graph~\ref{hcdg}, one would probably draw its associated periodic graph as in Figure~\ref{hcrect}, which is a different Euclidean embedding of the same periodic graph as in Figure~\ref{honeycomb}.
\end{ex}


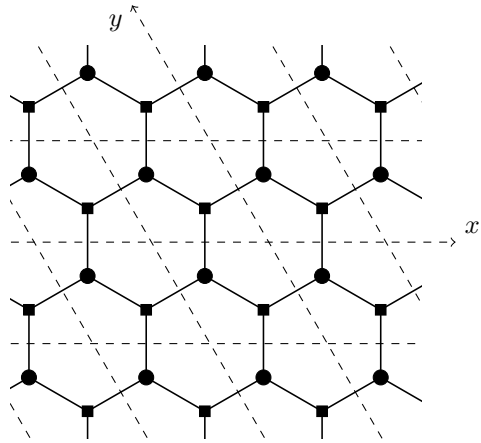
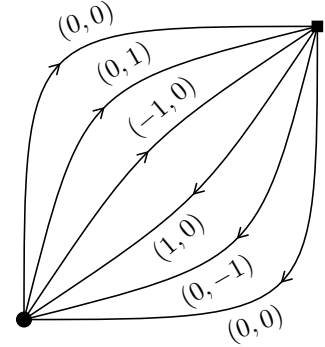
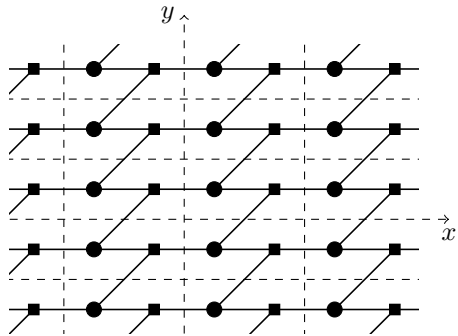
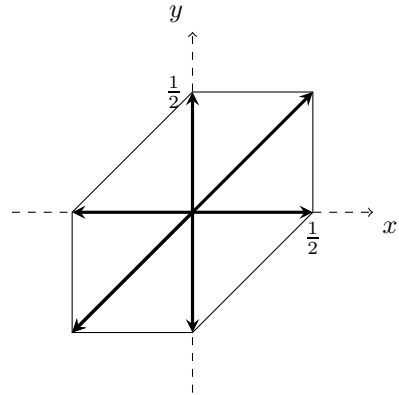
\begin{figure}
\begin{center}
\subfigure[A periodic graph periodically embedded in $\R^2$. Every edge represents a parallel pair of edges with opposite orientation.]{\label{honeycomb}
\begin{tikzpicture}[scale=0.9]
\newcommand{\latticesize}{7}
\newcommand{\xaxis}{2}
\newcommand{\yaxis}{-4}
\begin{scope}
\clip (0.6,0.6) rectangle (\latticesize+1.4-1.732,\latticesize-.6) ;
\tikzstyle{vertex}=[inner sep=2pt,fill=black,draw]
\tikzstyle{edges}=[semithick]
\foreach \x in {-\latticesize,...,\latticesize}
{
	\foreach \y in {-\latticesize,...,\latticesize}
	{
		\node at (-1.732*\x - 0.866*\y, 1.5*\y) [circle,vertex] {} ;
		\node at (-1.732*\x - 0.866*\y, 1.5*\y + 1) [rectangle,vertex] {} ;
		\draw [edges] (-1.732*\x - 0.866*\y, 1.5*\y) -- (-1.732*\x - 0.866*\y, 1.5*\y + 1) ;	
		\draw [edges] (-1.732*\x - 0.866*\y, 1.5*\y) -- (-1.732*\x - 0.866*\y - 0.866, 1.5*\y - 0.5) ;	
		\draw [edges] (-1.732*\x - 0.866*\y, 1.5*\y) -- (-1.732*\x - 0.866*\y + 0.866, 1.5*\y - 0.5) ;	
	}
}
\foreach \x in {-\latticesize,...,\yaxis}
	\draw [dashed] (-1.732*\x+1.732 - .5, 0) -- (-1.732*\x+1.732 - 0.866*\latticesize - .5, 1.5*\latticesize) ;
\foreach \x in {\yaxis,...,\latticesize}
	\draw [dashed] (-1.732*\x-1.732 - .5, 0) -- (-1.732*\x-1.732 - 0.866*\latticesize - .5, 1.5*\latticesize) ;
\foreach \y in {0,...,\xaxis}
	\draw [dashed] (1.732*\latticesize - 0.866*\y+0.866, 1.5*\y-1.5 + 0.5) -- (-1.732*\latticesize - 0.866*\y+0.866, 1.5*\y-1.5 + 0.5) ;
\foreach \y in {\xaxis,...,\latticesize}
	\draw [dashed] (1.732*\latticesize - 0.866*\y-0.866, 1.5*\y+1.5 + 0.5) -- (-1.732*\latticesize - 0.866*\y-0.866, 1.5*\y+1.5 + 0.5) ;
\end{scope}
\draw [->,dashed] (-1.732*\yaxis - 0.5 - 0.577*0.6, 0.6) -- (-1.732*\yaxis - 0.577*\latticesize - .5, \latticesize) node [anchor=north east] {$y$} ;
\draw [->,dashed] (0.6, 1.5*\xaxis + 0.5) -- (\latticesize+1.9-1.732, 1.5*\xaxis + 0.5) node [anchor=south west] {$x$} ;
\end{tikzpicture}}\hspace{1cm}
\subfigure[The displacement graph associated to \subref{honeycomb} with weights $(x,y)$.]{\label{hcdg}
\begin{tikzpicture}[scale=.78]
\clip (-.5,-.5) rectangle (5.5,5.5) ;
\tikzstyle{vertex}=[inner sep=2pt,fill=black,draw]
\tikzstyle{edges}=[semithick]
\node at (0,0) [circle,vertex] {} ;
\node at (5,5) [rectangle,vertex] {} ;
\draw [edges,decoration={markings,mark=at position 0.5 with \arrow{angle 60}},postaction=decorate] (0,0) .. controls (0,5) .. (5,5) node [pos=.62,sloped,above] {$(0,0)$} ;
\draw [edges,decoration={markings,mark=at position 0.5 with \arrow{angle 60}},postaction=decorate] (0,0) .. controls (1,4) .. (5,5) node [pos=.62,sloped,above] {$(0,1)$} ;
\draw [edges,decoration={markings,mark=at position 0.5 with \arrow{angle 60}},postaction=decorate] (0,0) .. controls (2,3) .. (5,5) node [pos=.62,sloped,above] {$(-1,0)$} ;
\draw [edges,decoration={markings,mark=at position 0.5 with \arrow{angle 60}},postaction=decorate] (5,5) .. controls (3,2) .. (0,0) node [pos=.62,sloped,below] {$(1,0)$} ;
\draw [edges,decoration={markings,mark=at position 0.5 with \arrow{angle 60}},postaction=decorate] (5,5) .. controls (4,1) .. (0,0) node [pos=.62,sloped,below] {$(0,-1)$} ;
\draw [edges,decoration={markings,mark=at position 0.5 with \arrow{angle 60}},postaction=decorate] (5,5) .. controls (5,0) .. (0,0) node [pos=.62,sloped,below] {$(0,0)$} ;
\end{tikzpicture}}\\[.5cm]
\subfigure[A different embedding of \subref{honeycomb}.]{\label{hcrect}\begin{tikzpicture}[scale=.8]
\newcommand{\latticesize}{6}
\newcommand{\xaxis}{2}
\newcommand{\yaxis}{2}
\begin{scope}
\clip (0.6,0.6) rectangle (\latticesize+1.4,\latticesize-.6) ;
\tikzstyle{vertex}=[inner sep=2pt,fill=black,draw]
\tikzstyle{edges}=[semithick]
\foreach \x in {0,...,\latticesize}
{
	\foreach \y in {0,...,\latticesize} \node at (2*\x,\y) [circle,vertex] {} ;
	\foreach \y in {0,...,\latticesize}
	{
		\node at (2*\x+1,\y) [rectangle,vertex] {} ;
		\draw [edges] (2*\x,\y) -- (2*\x+1,\y) ;	
		\draw [edges] (2*\x-1,\y) -- (2*\x,\y) ;	
		\draw [edges] (2*\x,\y) -- (2*\x+1,\y+1) ;	
	}
}
\foreach \x in {1,...,\yaxis}
	\draw [dashed] (2*\x-2-.5,0) -- (2*\x-2-.5,\latticesize) ;
\foreach \x in {\yaxis,...,\latticesize}
	\draw [dashed] (2*\x+2-.5,0) -- (2*\x+2-.5,\latticesize) ;
\foreach \y in {0,...,\xaxis}
	\draw [dashed] (0,\y-1+.5) -- (\latticesize+2,\y-1+.5) ;
\foreach \y in {\xaxis,...,\latticesize}
	\draw [dashed] (0,\y+1+.5) -- (\latticesize+2,\y+1+.5) ;
\end{scope}
\draw [->,dashed] (2*\yaxis-.5,0.6) -- (2*\yaxis-.5,\latticesize-0.1) node [anchor=east] {$y$} ;
\draw [->,dashed] (0.6,\xaxis+.5) -- (\latticesize+1.9,\xaxis+.5) node [anchor=north] {$x$} ;
\end{tikzpicture}}\hspace{1cm}
\subfigure[The nonzero basic velocities (arrows) and the velocity polytope (hexagon).]{\label{hcvp}\begin{tikzpicture}[scale=.8]
\draw[->,dashed] (-3,0) -- (3,0) node [anchor=north west] {$x$} ;
\draw[->,dashed] (0,-3) -- (0,3) node [anchor=south east] {$y$} ;
\draw[very thick,->,>=stealth] (0,0) -- (2,0) node [anchor=north] {$\frac{1}{2}$} ;
\draw[very thick,->,>=stealth] (0,0) -- (-2,0) ;
\draw[very thick,->,>=stealth] (0,0) -- (0,2) node [anchor=east] {$\frac{1}{2}$} ;
\draw[very thick,->,>=stealth] (0,0) -- (0,-2) ;
\draw[very thick,->,>=stealth] (0,0) -- (2,2) ;
\draw[very thick,->,>=stealth] (0,0) -- (-2,-2) ;
\draw[very thick,->,>=stealth] (0,0) -- (-2,-2) ;
\draw (2,0) -- (2,2) -- (0,2) -- (-2,0) -- (-2,-2) -- (0,-2) -- (2,0) ;
\end{tikzpicture}}
\end{center}
\caption{A periodic graph in two different embeddings~\subref{honeycomb},~\subref{hcrect}, its associated displacement graph~\subref{hcdg} and its velocity polytope~\subref{hcvp}. For more detail, see Examples~\ref{hc1} and~\ref{hc2}.}
\label{figure}
\end{figure}

\section{Velocity Polytopes}
\label{main}

From now on, we assume $\Gamma$ to be a periodic graph equipped with a fixed choice of orbit representatives $\iota:V_{\Gamma/\Z^d}\to V_\Gamma$. If $p=e_1\ldots e_n$ is a path in $\Gamma/\Z^d$, then by abuse of notation we define its displacement to be given by
\beq
\label{deltap}
\delta(p)\equiv \sum_i \delta(e_i) \:,
\eeq
which is nicely coherent with the lifting properties of Lemma~\ref{lift}: if each $e_i$ lifts to an edge which intuitively translates by $\delta(e_i)$, then the path $e_1\ldots e_n$ should lift to a path which intuitively tranlates by $\sum_i \delta(e_i)$.

\subsection{Velocity.} We now formalize the concepts introduced in the introduction. For technical convenience, we formally define a trajectory as a sequence of edges rather than vertices:

\begin{defn}
\label{deftraj}
A \emph{trajectory} in $\Gamma$ is a sequence $(f_n)_{n\in\N}$ of edges $f_n\in E_\Gamma$ such that $s(f_{n+1})=t(f_n)$ for all $n\in\N$.
\end{defn}

Intuitively, a trajectory is nothing but an infinite path in $\Gamma$. Thanks to Lemma~\ref{lift}, up to an overall translation a trajectory in $\Gamma$ is uniquely specified by the sequence of edges $\phi(f_n)\in E_{\Gamma/\Z^d}$, which are the images under the projection $\phi:\Gamma\to \Gamma/\Z^d$. In the following, we will abuse notation by also writing $f_n$ for $\phi(f_n)$.

By the definition\eq{deltap}, the displacement traversed by the trajectory $f$ between $n=n_1$ and $n=n_2$, i.e. along the path $f_{n_1}\ldots f_{n_2-1}$, is given by
$$
\sum_{k=n_1}^{n_2-1} \delta(f_k) \:.
$$
Since this displacement gets traversed in $n_2-n_1$ timesteps, it makes sense to define the \emph{velocity} in that time interval to be given by the difference quotient
\beq
\label{finvel}
\frac{\sum_{k=n_1}^{n_2-1} \delta(f_k)}{n_2-n_1}
\eeq
The trajectory $f$ has a well-defined \emph{velocity} if the limit
\beq
\label{vel}
u_f=\lim_{n\to\infty} \frac{\sum_{k=1}^n \delta(f_k)}{n} \:\:\in\: \R^d
\eeq
exists. In this case, the difference quotient\eq{finvel} also converges to $u_f$ for $n_2\to\infty$ with any fixed $n_1\in\N$.

In the following, $||\cdot||$ will be a fixed but arbitrary norm on $\R^d$. 

\begin{lem}
\label{iotaind}
The velocity $u_f$ of the trajectory $f$ does not depend on the particular choice of representatives $\iota:V_{\Gamma/\Z^d}\to V_\Gamma$ used for constructing the displacement function $\delta$, but only on $\Gamma$ and $f$ themselves.
\end{lem}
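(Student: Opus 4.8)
The plan is to show that changing the section $\iota$ only shifts the displacement function $\delta$ by a coboundary, and that such a shift becomes invisible after dividing by $n$ and letting $n\to\infty$. The first thing to record is that the trajectory $f$ is an intrinsic object: as a sequence of edges in $\Gamma$, and hence also as the sequence $\phi(f_n)\in E_{\Gamma/\Z^d}$ (which, as agreed, we keep writing $f_n$), it does not involve $\iota$ in any way. The \emph{only} place where $\iota$ enters the definition\eq{vel} of $u_f$ is through the displacement function $\delta$. Now suppose $\iota$ and $\iota'$ are two choices of orbit representatives, and let $g:V_{\Gamma/\Z^d}\to\Z^d$ be the unique function with $\iota'(v)=\iota(v)+g(v)$; as recalled above, the associated displacements are related by $\delta'(e)=\delta(e)+g(s(e))-g(t(e))$. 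It therefore suffices to compare the two partial sums $S_n\equiv\sum_{k=1}^n\delta(f_k)$ and $S'_n\equiv\sum_{k=1}^n\delta'(f_k)$ appearing in\eq{vel}.

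The key step is a telescoping identity. Because $f$ is a trajectory we have $s(f_{k+1})=t(f_k)$ for every $k$, so the coboundary terms cancel in pairs:
\[
S'_n-S_n=\sum_{k=1}^n\bigl(g(s(f_k))-g(t(f_k))\bigr)=g(s(f_1))-g(t(f_n)).
\]
Since $\Gamma/\Z^d$ is a finite graph, $g$ takes only finitely many values in $\Z^d$, so there is a constant $C$, depending only on $g$ (hence only on $\iota,\iota'$) and neither on $n$ nor on $f$, with $||S'_n-S_n||\leq C$ for all $n$. Consequently $||S'_n/n-S_n/n||\leq C/n\to 0$ as $n\to\infty$. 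Hence the sequence $S_n/n$ converges if and only if $S'_n/n$ does, and when they converge the limits agree; that is, $u_f$ computed from $\delta'$ equals $u_f$ computed from $\delta$. This is exactly the assertion of the lemma, since $u_f$ depends on nothing else besides $\Gamma$, $f$, and the chosen $\delta$.

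I do not expect a genuine obstacle here: the statement is a soft consequence of the fact that a change of trivialization acts on $\delta$ by a coboundary, together with the finiteness built into Definition~\ref{pg}. The only point that deserves to be stated carefully is the one made in the first paragraph, namely that a trajectory is defined without reference to $\iota$, so that $\iota$ affects $u_f$ solely through $\delta$; once this is granted, the telescoping computation is immediate. If desired, one can note in passing that the same bound $||S'_n-S_n||\leq C$ shows that the finite-interval velocities\eq{finvel} change by at most $C/(n_2-n_1)$ under a change of $\iota$, so that no quantity in the construction is genuinely sensitive to this choice.
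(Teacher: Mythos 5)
Your proof is correct and follows essentially the same route as the paper's: a change of section shifts $\delta$ by the coboundary $g(s(e))-g(t(e))$, the sum along the trajectory telescopes to $g(s(f_1))-g(t(f_n))$, and the resulting uniformly bounded discrepancy vanishes after dividing by $n$. The preliminary remark that the trajectory itself is independent of $\iota$ is a worthwhile clarification but does not change the argument.
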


\begin{proof}
Let $\iota,\iota':V_{\Gamma/\Z^d}\to V_\Gamma$ be two choices of orbit representatives. It has been noted in Section~\ref{dissec} that their displacement functions satisfy
$$
\delta'(e) = \delta(e) + g(s(e)) - g(t(e))
$$
for some appropriate function $g:V_{\Gamma/\Z^d}\to\Z^d$. Then the displacements associated to the path $f_1\ldots f_n$ differ by
\beq
\label{regauge}
\delta'(f_1\ldots f_n) = \delta(f_1\ldots f_n) + \sum_{k=1}^n \big[ g(s(f_k)) - g(t(f_k)) \big] \:.
\eeq
Due to $s(f_{k+1}) = t(f_k)$, the sum is telescoping, so that
$$
\delta'(f_1\ldots f_n) = \delta(f_1\ldots f_n) + g(s(f_1)) - g(t(f_n)) \:.
$$
Writing $C\equiv\max_{v\in V} ||g(v)||$, we conclude that
$$
\left|\left|\frac{\sum_{k=1}^n \delta'(f_k)}{n} - \frac{\sum_{k=1}^n\delta(f_k)}{n} \right|\right| \leq \frac{2C}{n}
$$
from which the assertion immediately follows by taking the limit $n\ra\infty$.
\end{proof}

Note that the velocity of a finite path as in\eq{finvel} is in general not well-defined in the sense of the lemma, but does depend on the choice of $\iota$.

\begin{rem}
\label{physics}
When $\Gamma\subseteq\R^d$ is a Euclidean periodic graph, then we would like this notion of velocity to correspond to the usual one familiar from classical mechanics. As already mentioned in Remark~\ref{affine}, we can always apply an affine transformation to $\Gamma$ such that the unit cell becomes the ordinary unit cube $[0,1]^d$. As a matter of bookkeeping, this also changes all velocity vectors by the same affine transformation.

We claim that if the unit cell of $\Gamma$ is the unit cube, then Definition\eq{vel} gives precisely the usual concept of velocity vector. To see this, let us choose the orbit representatives $\iota$ to be those in the unit cell, and write $M$ for the maximal distance between any two vertices in the unit cell. Then, the actual distance vector traversed along the path $f_1\ldots f_n$ will differ from the displacement $\sum_k \delta(f_k)$ by at most $2M$. In the limit as $n\to\infty$, this is negligible, since all distances get divided by the total elapsed time $n$. This proves the claim.
\end{rem}

As a trivial example, a constant trajectory has a velocity of $0$. Similarly for any trajectory which stays in a bounded region in $\Gamma\subseteq\R^d$.

More non-trivial examples of velocities for arbitrary $\Gamma$ are as follows: for a cycle $c=e_1\ldots e_n$ in $\Gamma/\Z^d$, we define the \emph{basic velocity} associated to $c$ to be given by
$$
u_c=\frac{\delta(c)}{n}=\frac{\sum_{k=1}^n \delta(e_k)}{n}
$$
This coincides with\eq{finvel}. Lemma~\ref{lift} guarantees that $c$ lifts to a unique path in $\Gamma$ upon choosing an arbitrary vertex in $\phi^{-1}(s(e_1))$ as starting point. The trajectory defined by lifting a periodic traversal of $c$ from $\Gamma/\Z^d$ to $\Gamma$ has the basic velocity $u_c$ as its velocity.

One can regard the set of basic velocities as an invariant of the periodic graph:

\begin{lem}
\label{gaugeinv}
For a given cycle $c$, the basic velocity $u_c$ does not depend on the particular choice of $\iota:V_{\Gamma/\Z^d}\to V_\Gamma$ used in constructing the displacement function $\delta$.
\end{lem}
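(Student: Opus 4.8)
The plan is to reduce the claim to the gauge-transformation formula already recorded in Section~\ref{dissec}, exploiting the fact that a cycle is a \emph{closed} path so that the relevant telescoping sum collapses completely (rather than to boundary terms as in Lemma~\ref{iotaind}).

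Concretely, let $\iota,\iota':V_{\Gamma/\Z^d}\to V_\Gamma$ be two choices of orbit representatives, with associated displacement functions $\delta$ and $\delta'$. As noted in Section~\ref{dissec}, there is a function $g:V_{\Gamma/\Z^d}\to\Z^d$ with
$$
\delta'(e) = \delta(e) + g(s(e)) - g(t(e)) \qquad \forall e\in E_{\Gamma/\Z^d}.
$$
Writing $c=e_1\ldots e_n$, I would sum this identity over the edges of $c$ to get
$$
\delta'(c) = \delta(c) + \sum_{k=1}^n \big[ g(s(e_k)) - g(t(e_k)) \big].
$$
Because $c$ is a closed path we have $t(e_k)=s(e_{k+1})$ for $k=1,\ldots,n-1$ and additionally $t(e_n)=s(e_1)$, so the sum on the right telescopes all the way around to $0$. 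Hence $\delta'(c)=\delta(c)$, and dividing by $n=|c|$ yields $u_c' = \delta'(c)/n = \delta(c)/n = u_c$, which is the assertion.

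There is essentially no obstacle here; the only point requiring a moment's care is noting that the telescoping is genuinely complete, which is exactly where the cycle condition $s(e_1)=t(e_n)$ enters (a general path would leave the residual term $g(s(e_1))-g(t(e_n))$, as in the proof of Lemma~\ref{iotaind}). Alternatively, one could avoid the explicit computation altogether: the trajectory obtained by lifting a periodic traversal of $c$ has velocity $u_c$ in the sense of\eq{vel}, and Lemma~\ref{iotaind} already guarantees that the velocity of any trajectory is independent of $\iota$; this gives the statement immediately. I would probably present the direct computation, since it is short and self-contained, and remark on the alternative.
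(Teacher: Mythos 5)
Your proposal is correct and matches the paper's own argument, which likewise applies the gauge-transformation identity\eq{regauge} to the cycle and observes that the closedness condition makes the telescoping sum cancel completely. The alternative you mention (deducing the lemma from Lemma~\ref{iotaind} via the lifted periodic trajectory) is also valid but is not the route the paper takes.
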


\begin{proof}
Applying equation\eq{regauge} in this case, one finds that all terms in the sum cancel each other, so that $\delta'(c) = \delta(c)$.
\end{proof}

\begin{rem}
In gauge theory terms, a basic velocity is the \emph{curvature} of the connection $\delta$. Lemma~\ref{gaugeinv} states that curvature is invariant under gauge transformations.
\end{rem}

Lemma~\ref{fingraph}\ref{fincycles} implies that there is only a finite number of basic velocities for fixed $\Gamma$.

\begin{ex}
\label{hc2}
We go back to the periodic graph illustrated in Figure~\ref{figure}. There are $9$ cycles in the displacement graph of Figure~\ref{hcdg} which have ``\begin{tikzpicture}\node[circle,inner sep=2pt,fill=black,draw]{};\end{tikzpicture}'' as their starting vertex, all of length $2$; the other $9$ cycles with ``\begin{tikzpicture}\node[rectangle,inner sep=2pt,fill=black,draw]{};\end{tikzpicture}'' as their starting vertex have the same basic velocities, so it is sufficient to consider the former. The basic velocities are\\
\begin{align*}
\phantom{-}\frac{(0,0)+(0,0)}{2}\:,\qquad \phantom{-}\frac{(0,0)+(0,-1)}{2}\:,\qquad &\phantom{-}\frac{(0,0)+(1,0)}{2}\:,\\[.2cm]
\phantom{-}\frac{(0,1)+(0,0)}{2}\:,\qquad \phantom{-}\frac{(0,1)+(0,-1)}{2}\:,\qquad &\phantom{-}\frac{(0,1)+(1,0)}{2}\:,\\[.2cm]
\frac{(-1,0)+(0,0)}{2}\:,\qquad \frac{(-1,0)+(0,-1)}{2}\:,\qquad &\frac{(-1,0)+(1,0)}{2}\:.\\[-.2cm]
\end{align*}
The nonzero ones are depicted in Figure~\ref{hcvp}.
\end{ex}

Now that we have seen some examples of velocities, a natural question to ask is the following:

\begin{qstn}
\label{mainqstn}
Given a periodic graph $\Gamma$, what is the set of its velocities?
\end{qstn}

\subsection{Main theorem.} We now proceed to answer Question~\ref{mainqstn} and give an explicit description of the set of velocities of $\Gamma$ as a subset of $\R^d$. We still take the periodic graph $\Gamma$ with its associated displacement graph $(\Gamma/\Z^d,\delta)$ fixed. As before, $||\cdot||$ will be a fixed but arbitrary norm on $\R^d$. The constant 
$$
C\equiv \max_{e\in E}||\delta(e)||
$$
will be of some use. $|V|$ will always stand for $|V_{\Gamma/\Z^d}|$.

\begin{lem}
\label{finpath} For every path $p$ in $\Gamma/\Z^d$, there are cycles $c_1,\ldots,c_k$ in $\Gamma/\Z^d$ such that
$$
\left|\left|\delta(p)-\sum_{i=1}^k \delta(c_i)\right|\right| < C\, |V| \qquad\textit{and}\qquad 0\leq |p|-\sum_{i=1}^k |c_i|\leq |V| \:.
$$
\end{lem}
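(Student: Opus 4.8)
The plan is to prove the lemma by induction on the length $|p|$, at each stage peeling a contiguous cyclic subpath off $p$ by means of Lemma~\ref{fingraph}\ref{pathcycle}. For the base case, if $|p| < |V|$ we take no cycles at all, i.e.\ $k = 0$; then $\delta(p) - \sum_{i=1}^{k}\delta(c_i) = \delta(p)$, and the triangle inequality together with the definition\eq{deltap} gives
$$
\left\| \delta(p) \right\| \;\le\; \sum_{j=1}^{|p|} \|\delta(e_j)\| \;\le\; |p|\,C \;\le\; (|V|-1)\,C \;<\; C\,|V| ,
$$
while $0 \le |p| - 0 = |p| \le |V|$ is immediate. (We tacitly assume $C > 0$; the case $C = 0$ means $\delta \equiv 0$ and is degenerate.)

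For the inductive step, let $|p| = n \ge |V|$ and write $p = e_1\cdots e_n$. By Lemma~\ref{fingraph}\ref{pathcycle} there are indices $a\le b$ with $c \equiv e_a e_{a+1}\cdots e_b$ a cycle, so in particular $s(e_a) = t(e_b)$. Excising this block yields $p' \equiv e_1\cdots e_{a-1}\,e_{b+1}\cdots e_n$, which is again a genuine path, because the single new adjacency to be checked, $t(e_{a-1}) = s(e_{b+1})$, follows from $t(e_{a-1}) = s(e_a) = t(e_b) = s(e_{b+1})$ --- with the evident reading when $a = 1$ or $b = n$, so that $p'$ becomes a suffix, a prefix, or the empty path. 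By\eq{deltap} and additivity of length,
$$
\delta(p) = \delta(p') + \delta(c), \qquad |p| = |p'| + |c| ,
$$
and since $|c| \ge 1$ we have $|p'| < |p|$, so the induction is well-founded. Applying the induction hypothesis to $p'$ produces cycles $c_1,\dots,c_m$ with $\bigl\|\delta(p') - \sum_{i=1}^m \delta(c_i)\bigr\| < C|V|$ and $0\le |p'| - \sum_{i=1}^m |c_i| \le |V|$; appending $c$ (so $k = m+1$, $c_k = c$) and substituting the two displayed identities turns both estimates into exactly the asserted ones.

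I do not expect any real obstacle: the argument is combinatorial bookkeeping, and the only structural input is Lemma~\ref{fingraph}\ref{pathcycle}, whose force is that a long enough path contains a \emph{contiguous} cyclic block, which is precisely what lets the excision $p \mapsto p'$ land on an honest path and makes $\delta$ and $|\cdot|$ decompose additively. Termination is free, since each excision deletes at least one edge and the recursion therefore bottoms out at a path of length $< |V|$; the surplus edges left over at the bottom account for the $|V|$ in the length bound, and their total displacement --- a sum of at most $|V|-1$ vectors of norm $\le C$ --- accounts for the $C|V|$ in the norm bound.
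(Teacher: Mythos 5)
Your proof is correct and follows essentially the same route as the paper's: the base case $|p|<|V|$ with $k=0$, then strong induction on $|p|$ using Lemma~\ref{fingraph}\ref{pathcycle} to excise a contiguous cycle $c$ from $p$, leaving a shorter path $p'$ with $\delta(p)=\delta(p')+\delta(c)$ and $|p|=|p'|+|c|$. You are in fact slightly more careful than the paper in checking that the excised $p'$ is still a path and in flagging the degenerate case $C=0$ (where the strict inequality, as stated, fails for both proofs).
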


\begin{proof}
For $|p|<|V|$, there is nothing to prove since one can just take $k=0$, i.e.~the sum over cycles to be empty.
For $|p|\geq|V|$, we use induction on $|p|$. By Lemma~\ref{fingraph}\ref{pathcycle}, the path $p$ can be written in the form
$$
p=p_0 c p_1
$$
where $c$ is a cycle, so that the paths $p_0$ and $p_1$ can be composed to $p'=p_0 p_1$. Since $|c|\geq 1$, we conclude $|p'|=|p_0|+|p_1|<|p|$, so that an application of the induction assumption to $p'$ gives cycles $c_2,\ldots,c_k$ with
$$
\left|\left|\delta(p')-\sum_{i=2}^k \delta(c_i)\right|\right| < C\, |V| \qquad\textit{and}\qquad 0\leq |p'|-\sum_{i=2}^k |c_i|\leq |V| \:.
$$
The conclusion follows by setting $c_1=c$ and observing $\delta(p)=\delta(c_1)+\delta(p')$ and $|p|=|c_1|+|p'|$.
\end{proof}

\begin{thm}[Main theorem]
\label{mainthm}
Let $\Gamma$ be such that $\Gamma/\Z^d$ is strongly connected. Then the set of velocities of $\Gamma$ coincides with the convex hull 
\beq
P_\Gamma \equiv \mathrm{conv}\left\{u_f\:|\: u_f \textrm{ basic velocity in }\Gamma/\Z^d\right\} \:.
\eeq 
In particular, $P_\Gamma$ is a rational polytope in $\R^d$, the \emph{velocity polytope} of $\Gamma$.
\end{thm}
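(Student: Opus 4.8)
The plan is to prove the two inclusions separately: every basic velocity is a velocity of $\Gamma$ (hence, by convexity of the velocity set, so is the entire convex hull $P_\Gamma$), and conversely every velocity $u_f$ of $\Gamma$ lies in $P_\Gamma$. The first half of the first inclusion is already done: it was observed just before Lemma~\ref{gaugeinv} that lifting a periodic traversal of a cycle $c$ gives a trajectory with velocity $u_c$. So the work is to show (i) the velocity set is convex, and (ii) $u_f \in P_\Gamma$ for every trajectory with well-defined velocity.

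For convexity, suppose $u$ and $u'$ are velocities realized by trajectories $f$ and $f'$, and fix a rational $\lambda\in[0,1]$, say $\lambda = a/(a+b)$. Here is where strong connectedness of $\Gamma/\Z^d$ enters: I would build a new trajectory that alternately follows long stretches of (a lift of) $f$ and long stretches of (a lift of) $f'$, using short connecting paths in between to bridge from the endpoint of one stretch to the start of the next — such bridging paths exist and have bounded length because $\Gamma/\Z^d$ is finite and strongly connected. By choosing the $n$-th block of $f$-type to have length $\approx a N_n$ and the $n$-th block of $f'$-type to have length $\approx b N_n$ with $N_n\to\infty$ growing fast enough that the bounded bridging corrections and the tail effects wash out, the resulting trajectory has velocity $\lambda u + (1-\lambda)u'$. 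Continuity of the velocity set (it will turn out to be closed, being a polytope) then upgrades this to all real $\lambda$. An alternative, cleaner route to the same end: once the second inclusion below is established, $P_\Gamma\supseteq\{\text{velocities}\}$ is automatic, and for the reverse it suffices to realize each \emph{vertex} of $P_\Gamma$ (a basic velocity) plus show the velocity set is convex — so the interpolation argument is the crux either way.

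For the inclusion $u_f \in P_\Gamma$, this is exactly what Lemma~\ref{finpath} is engineered for. Given a trajectory $f$ with velocity $u_f = \lim_n \frac{1}{n}\delta(f_1\cdots f_n)$, apply Lemma~\ref{finpath} to the finite path $p^{(n)} = f_1\cdots f_n$ to obtain cycles $c_1^{(n)},\dots,c_{k_n}^{(n)}$ with
$$
\left\|\,\delta(p^{(n)}) - \sum_{i} \delta(c_i^{(n)})\,\right\| < C\,|V|, \qquad 0 \le n - \sum_i |c_i^{(n)}| \le |V|.
$$
Dividing by $n$, the left estimate shows $\frac{1}{n}\delta(p^{(n)})$ and $\frac{1}{n}\sum_i\delta(c_i^{(n)})$ have the same limit, namely $u_f$. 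Now write $\frac{1}{n}\sum_i \delta(c_i^{(n)}) = \sum_i \frac{|c_i^{(n)}|}{n}\, u_{c_i^{(n)}}$, a combination of basic velocities with nonnegative coefficients $\frac{|c_i^{(n)}|}{n}$ summing to $\sum_i \frac{|c_i^{(n)}|}{n} = 1 - \frac{n - \sum_i|c_i^{(n)}|}{n}$, which lies in $[1 - |V|/n,\,1]$ and hence tends to $1$. Thus $\frac1n\sum_i\delta(c_i^{(n)})$ is, up to an error of size $O(1/n)$, a genuine convex combination of the finitely many basic velocities, so it lies within $O(1/n)$ of the compact set $P_\Gamma$; letting $n\to\infty$ gives $u_f\in P_\Gamma$. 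That there are only finitely many basic velocities is Lemma~\ref{fingraph}\ref{fincycles}, and $P_\Gamma$, being the convex hull of finitely many rational points, is a rational polytope — giving the final sentence of the theorem.

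The main obstacle is the convexity direction, specifically making the block-interpolation trajectory rigorous: one must control (a) the existence and uniform length bound of the bridging paths in $\Gamma/\Z^d$ — immediate from finiteness and strong connectedness, set $L = $ the diameter of $\Gamma/\Z^d$; (b) the accumulated error from the bridges, which over the first $m$ blocks is $O(mL)$ in displacement, so block lengths must grow fast enough (e.g. $N_m \ge m\cdot 2^m$) that $O(mL)/(\text{total length through block }m)\to 0$; and (c) the ``tail'' issue that a generic partial sum up to time $n$ falls in the middle of a block, not at a block boundary — here one uses that within a single block the partial velocity stays close to $u$ or to $u'$, both of which are within $\mathrm{diam}\,P_\Gamma$ of the target, and this discrepancy again divides away as $n\to\infty$ because block $m$'s length is negligible compared to the cumulative length $N_1+\cdots+N_m$. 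None of these is deep, but assembling the estimates carefully is the part that needs attention; everything else follows formally from Lemmas~\ref{finpath}, \ref{fingraph}, and the remarks preceding Lemma~\ref{gaugeinv}.
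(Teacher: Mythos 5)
Your second inclusion (every velocity lies in $P_\Gamma$) is essentially the paper's own argument: apply Lemma~\ref{finpath} to $f_1\cdots f_n$, read off an almost-convex combination of basic velocities whose coefficients sum to $1-O(1/n)$, and finish by closedness of the polytope; that part is correct. For the reverse inclusion the paper proceeds differently: it fixes a target combination $\sum_i\lambda_i\,\delta(c_i)/|c_i|$ and directly builds a single trajectory from blocks $q_k$ containing each cycle $c_i$ with multiplicity $\alpha_{ik}=\lfloor k\lambda_i/|c_i|\rfloor$, repeating $q_k$ exactly $k$ times; since $|q_k|=k+O(1)$, each new segment has length $O(k^2)$ against a cumulative length $k^3/3$, so partial segments are automatically negligible. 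Your alternative --- realize each basic velocity and prove the set of velocities is convex by interpolating between two trajectories --- is a legitimate strategy, but your implementation has a genuine error.

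The problem is that your growth condition on the block lengths points in exactly the wrong direction. With $N_m\geq m\cdot 2^m$ one has $\sum_{j<m}j2^j=(m-2)2^m+2$, so the $m$-th block is \emph{comparable to}, not negligible against, the total length of all preceding blocks. Hence at a time $n$ at the end of the $m$-th $f$-block the running average $\frac1n\sum_{k\leq n}\delta(f_k)$ is displaced by a fixed positive amount toward $u$, and at the end of the $m$-th $f'$-block toward $u'$; the running average oscillates, the limit~(\ref{vel}) does not exist, and the constructed sequence is not a trajectory with a velocity at all (whenever $u\neq u'$). Your point (c) claims block $m$ is negligible compared to $N_1+\cdots+N_m$, which is vacuously true of any summand and not what is needed; the needed condition is $N_m=o(N_1+\cdots+N_{m-1})$. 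The two constraints are in fact compatible --- the bridging contribution after $m$ rounds is only $O(mL)$ against an elapsed time of $\sum_{j\leq m}(a+b)N_j$ --- so taking $N_m=m$ (slow growth, not fast) repairs the construction, provided each $f$-block is a fresh lift of a \emph{prefix} $f_1\cdots f_{aN_m}$ rather than a continuation of $f$ from where the previous block stopped: the velocity limit controls only prefix averages, not averages over windows whose left endpoint also tends to infinity. Finally, upgrading from rational to real $\lambda$ by appealing to closedness of the velocity set is circular, since that closedness is part of the conclusion; handle general $\lambda$ directly with floors as in~(\ref{alpha}).
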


We now give an outline of the proof before diving into the details. The idea is as follows: if $c_1$ and $c_2$ are cycles in $\Gamma/\Z^d$ with the same initial vertex, then the closed path $c_1c_2$ has a velocity which is a convex combination of the basic velocities associated to $c_1$ and $c_2$. An analogous statement holds for longer combinations of cycles. 

So in order to show that the velocity of a trajectory is always a convex combination of basic velocities, one can apply Lemma~\ref{finpath} in order to decompose the trajectory into cycles, noting that the right-hand side becomes irrelevant in the limit.

Conversely, for every convex combination of basic velocities one needs to construct a trajectory which has this velocity. By choosing the number of times that each cycle appears in a closed path, one can adjust the coefficients of the convex combination which corresponds to the velocity of (the lift of) that closed path. Therefore, one can try to combine the cycles such that they appear in the trajectory with the appropriate frequencies, while also inserting some auxiliary paths which connect between cycles with different starting vertices.

We now implement this strategy in detail.

\begin{proof}[Proof of Theorem~\ref{mainthm}]
Working with the displacement graph $(\Gamma/\Z^d,\delta)$ instead of the periodic graph $\Gamma$, we begin by showing that any velocity lies in the convex hull of the basic velocities. We first claim that for any trajectory $f$ and any $n\in\N$, the quotient
$$
w_n = \frac{\sum_{i=1}^n \delta(f_i)}{n}
$$
has the property that there is a vector $u$ in the convex hull of the basic velocity vectors such that
\beq
\label{claim}
||w_n-u|| < \frac{2\,|V|\,C}{n}
\eeq
To this end, we first approximate the path $f_1\ldots f_n$ as in Lemma~\ref{finpath} by cycles $c_1,\ldots,c_k$, so that
\beq
\label{twoineqs}
\left|\left|\sum_{i=1}^n \delta(f_i) - \sum_{j=1}^k \delta(c_j)\right|\right| < C\,|V| \qquad\textrm{and}\qquad 0\leq n-\sum_{j=1}^k |c_j| \leq |V|\:.
\eeq
The first of these two inequalities can be rewritten as
\beq
\label{wnineq}
\left|\left|w_n - \sum_{j} \frac{|c_j|}{n} \cdot \frac{\delta(c_j)}{|c_j|} \right|\right| < \frac{C\,|V|}{n}
\eeq
Since the fractions $\delta(c_j)/|c_j|$ are basic velocities, the expression
$$
u\equiv \sum_j \frac{|c_j|}{\sum_l |c_l|} \cdot \frac{\delta(c_j)}{|c_j|} 
$$
is a convex combination of basic velocities; by the second inequality of\eq{twoineqs}, this approximates well the term $\sum_j\frac{|c_j|}{n}\cdot\frac{\delta(c_j)}{|c_j|}$ appearing in\eq{wnineq}. More precisely,
$$
\left|\left| w_n - u \right|\right| = \left|\left| w_n - \frac{\sum_l|c_l|}{n}u - \left(1-\frac{\sum_l|c_l|}{n}\right)u \right|\right| \stackrel{(\ref{wnineq})}{<} \frac{C\,|V|}{n} + \left|\left|\left(1-\frac{\sum_l|c_l|}{n}\right)u\right|\right|
$$
Since $||u||\leq C$ and $0\leq n-\sum_l |c_l|\leq |V|$, we can also bound the second term on the right-hand side by $C|V|/n$, which proves the claim\eq{claim}.

It follows from\eq{claim} that when the trajectory has a well-defined velocity $\lim_n w_n$, then the distance from this velocity to the convex hull of basic velocities is smaller than $2|V|C/n$ for any $n$. Since that convex hull is a polytope and therefore closed, the limit $\lim_n w_n$ is itself in the convex hull of basic velocities.

Conversely, it has to be shown that any convex combination of basic velocities
\beq
\label{convvel}
\sum_{i=1}^r \lambda_i \frac{\delta(c_i)}{|c_i|}
\eeq
for weights $\lambda_1,\ldots,\lambda_r\geq 0$ with $\sum_i\lambda_i=1$ and cycles $c_i$ can be realized by a trajectory. In order to construct such a trajectory, let us choose natural numbers $\alpha_{ik}$ giving rational approximations to the numbers $\lambda_i/|c_i|$ as
\beq
\label{alpha}
\alpha_{ik}\equiv\left\lfloor k\cdot\frac{\lambda_i}{|c_i|}\right\rfloor\qquad\textrm{so that} \qquad \left|\frac{\alpha_{ik}}{k} - \frac{\lambda_i}{|c_i|} \right| \lt \frac{1}{k} \quad \forall i,k\in\N \:.
\eeq
Let us also choose paths $p_1,\ldots,p_r$ in $\Gamma/\Z^d$ such that $|p_i|<|V|$ and $p_i$ connects $t(c_i)$ to $s(c_{i+1})$ (with $c_{r+1}=c_1$, so that $p_r$ connects $t(c_r)$ to $s(c_1)$). Such a choice of paths exists due to the assumption of strong connectivity. Then the building blocks of the trajectory are going to be the paths
\beq
\label{defqk}
q_k \equiv \underbrace{c_1\ldots c_1}_{\alpha_{1k}\:\textrm{times}} p_1 \underbrace{c_2\ldots c_2}_{\alpha_{2k}\:\textrm{times}} p_2 \ldots \underbrace{c_r\ldots c_r}_{\alpha_{rk}\:\textrm{times}} p_r \:.
\eeq
Each $q_k$ is a closed path in $\Gamma/\Z^d$ in the sense that $s(q_k)=t(q_k)$. Its length can be estimated as
\beq
\label{qest}
|q_k|=\sum_{i=1}^r \alpha_{ik} |c_i| + \sum_{i=1}^r |p_i| = \sum_{i=1}^r\left( k\cdot\frac{\lambda_i}{|c_i|}\cdot|c_i| + O(1)\right) = k + O(1) \:,
\eeq
where $O(1)$ refers to a term which does not depend on $k$. This implies the rough estimate $||\delta(q_k)|| \leq O(k)$, which we record for future use.

The trajectory $f$ is defined to be (the lift to $\Gamma$) of the infinite path
\beq
\label{deff}
f\::\: q_1 q_2 q_2 q_3 q_3 q_3 \ldots \underbrace{q_k\ldots q_k}_{k\:\textrm{times}} \ldots \:.
\eeq
It needs to be shown that this trajectory has a velocity which equals~(\ref{convvel}).
We consider this trajectory up to a timestep $n$, i.e. the path $f_1\ldots f_n$. We choose $k$ (as a function of $n$) such that the path $f_1\ldots f_n$ has already concluded the whole $q_k$-segment from\eq{deff}, but not yet the whole $q_{k+1}$-segment. Then due to\eq{qest}, the $q_{k+1}$-segment will contribute to the time $n$ and the displacement $\delta$ by at most $O(k^2)$. Hence the total displacement up to $n$ timesteps is given by
$$
\sum_{i=1}^n \delta(f_i) = \sum_{m=1}^k m\cdot \delta(q_m) + O(k^2) \stackrel{(\ref{defqk})}{=} \sum_{m=1}^k \sum_{i=1}^r m\left(\alpha_{im} \delta(c_i) + \delta(p_i) \right) + O(k^2)
$$
Using condition~(\ref{alpha}) in the form $\left|\alpha_{im} - \lambda_i \frac{m}{|c_i|}\right|\leq O(1)$ evaluates this to
$$
\sum_{i=1}^n \delta(f_i) = \sum_{m=1}^k \sum_{i=1}^r m\left(\frac{\lambda_i m}{|c_i|} \delta(c_i) + O(1) \right) + O(k^2) = \frac{k^3}{3}\sum_{i=1}^r \lambda_i \frac{\delta(c_i)}{|c_i|}+O(k^2)
$$
On the other hand, the number $n$ of edges traversed, which equals the time elapsed, can be evaluated very similarly as
$$
n = \sum_{m=1}^k m\cdot |q_m| + O(k^2) = \sum_{m=1}^k \sum_{i=1}^r m\left(\alpha_{im} |c_i| + |p_i| \right) + O(k^2)
$$
$$
=\sum_{m=1}^k \sum_{i=1}^r m\left(\lambda_i m + O(1) \right) + O(k^2) = \frac{k^3}{3}\sum_{i=1}^r \lambda_i + O(k^2) = \frac{k^3}{3} + O(k^2) \:.
$$
Hence the velocity of the trajectory is given by
$$
\lim_{n\to \infty} \frac{\sum_{i=1}^n \delta(f_i)}{n} =
\lim_{k\to \infty} \frac{ \frac{k^3}{3}\sum_{i=1}^r \lambda_i \frac{\delta(c_i)}{|c_i|}+O(k^2) } { \frac{k^3}{3} + O(k^2) } = \sum_{i=1}^r \lambda_i \frac{\delta(c_i)}{|c_i|} \:,
$$
as desired.
\end{proof}

\begin{ex}
\label{hc3}
For the periodic graph of Figure~\ref{honeycomb} and~\ref{hcrect}, the velocity polytope is the hexagon in Figure~\ref{hcvp}.
\end{ex}

We now consider what happens when the connectedness assumption of Theorem~\ref{mainthm} is dropped.

\begin{prop}
\label{nonconn}
Let $\Gamma$ be any periodic graph. Let $\Gamma_1/\Z^d,\ldots,\Gamma_c/\Z^d$ be the strongly connected components of $\Gamma/\Z^d$ which have preimages $\Gamma_1,\ldots,\Gamma_c\subseteq\Gamma$. In this case, the set of velocities of $\Gamma$ is the union of polytopes
\beq
\label{fup}
P_\Gamma = P_{\Gamma_1}\cup \ldots \cup P_{\Gamma_c} \:.
\eeq
\end{prop}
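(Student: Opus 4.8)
The plan is to prove the two inclusions in~\eqref{fup}, reading $P_\Gamma$ throughout as the set of velocities of $\Gamma$. First I would make the objects on the right-hand side precise: writing $\Gamma_i/\Z^d$ for the subgraph of $\Gamma/\Z^d$ induced on (the vertex set of) the $i$-th strongly connected component and $\Gamma_i\equiv\phi^{-1}(\Gamma_i/\Z^d)$, one checks that the $\Z^d$-action on $\Gamma$ restricts to $\Gamma_i$ --- the map $\phi$ is constant on orbits, so $\Gamma_i$ is $\Z^d$-invariant --- and that this restricted action is still free and has only finitely many orbits, since $\Gamma_i/\Z^d$ is a subgraph of the finite graph $\Gamma/\Z^d$. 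Hence each $\Gamma_i$ is a $d$-dimensional periodic graph in the sense of Definition~\ref{pg}, with displacement graph $(\Gamma_i/\Z^d,\,\delta|_{E_{\Gamma_i/\Z^d}})$; one may simply take the restriction of $\iota$ as orbit representatives. Thus Theorem~\ref{mainthm} applies to every $\Gamma_i$, and its set of velocities equals $P_{\Gamma_i}$. The inclusion ``$\supseteq$'' is then immediate: any trajectory of $\Gamma_i$ is a trajectory of $\Gamma$, and the displacement of a finite sub-path is the same whether computed with $\delta$ or with its restriction, so every velocity of $\Gamma_i$ is a velocity of $\Gamma$; therefore $P_{\Gamma_1}\cup\dots\cup P_{\Gamma_c}\subseteq P_\Gamma$.

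For ``$\subseteq$'', let $f$ be a trajectory of $\Gamma$ with a well-defined velocity $u_f$, and project it to the finite graph $\Gamma/\Z^d$. Since the condensation of $\Gamma/\Z^d$ is a finite acyclic digraph, the sequence of strongly connected components visited by $f$ can change only finitely often (once a component has been left it is never revisited); hence there are an index $N$ and a component index $i$ with $f_n\in E_{\Gamma_i/\Z^d}$ for all $n\geq N$. The tail $(f_N,f_{N+1},\dots)$ is therefore a trajectory in $\Gamma_i$, and since $\delta$ is bounded, discarding the finitely many initial terms does not affect the Ces\`aro limit~\eqref{vel} --- this is precisely the remark immediately following~\eqref{vel}, applied with $n_1=N$. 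Hence $u_f$ is a velocity of $\Gamma_i$, and by Theorem~\ref{mainthm} we get $u_f\in P_{\Gamma_i}$. This yields $P_\Gamma\subseteq P_{\Gamma_1}\cup\dots\cup P_{\Gamma_c}$ and completes the proof.

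The only step carrying any content --- the ``hard'' part, such as it is --- is the claim that an infinite path in a finite digraph is eventually confined to a single strongly connected component; this is exactly where acyclicity of the condensation is used, and it should be phrased carefully enough to cover also the degenerate components consisting of a single vertex with no loop, which support no trajectory and for which $P_{\Gamma_i}=\mathrm{conv}\,\emptyset=\emptyset$ (such components simply never occur as the eventual component of a trajectory). Everything else is routine bookkeeping: verifying that $\Gamma_i$ satisfies Definition~\ref{pg}, so that Theorem~\ref{mainthm} and the $\iota$-independence of velocities (Lemma~\ref{iotaind}) are legitimately available for it.
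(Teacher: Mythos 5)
Your proof is correct and follows essentially the same route as the paper's: the ``$\supseteq$'' direction is immediate, and the ``$\subseteq$'' direction rests on the observation that a trajectory is eventually confined to a single strongly connected component, so its velocity is the velocity of the tail. You simply fill in details the paper leaves implicit (acyclicity of the condensation, that each $\Gamma_i$ is itself a periodic graph, and the degenerate single-vertex components), all of which are handled correctly.
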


\begin{proof}
The $\Gamma_i$ are defined as the preimages under $\phi:\Gamma\to\Gamma/\Z^d$ of the strongly connected components of $\Gamma/\Z^d$. Then for every trajectory $(f_n)_{n\in\N}$, there is $n_0\in\N$ such that all $f_n$ for $n\geq n_0$ lie in the same $\Gamma_i$. The velocity of $f$ therefore lies in the corresponding velocity polytope $P_{\Gamma_i}$. This proves the ``$\subseteq$'' inclusion of\eq{fup}. The ``$\supseteq$'' inclusion is clear since any trajectory in some $\Gamma_i$ is also a trajectory in $\Gamma$.
\end{proof}

\section{The large-scale geometry of a periodic graph}

We now relate the velocity polytope $P_{\Gamma}$ to the large-scale geometry of $\Gamma$. What we mean by this is the following. There is a natural notion of distance between two vertices defined to be the length of the shortest path connecting the two vertices. This defines a metric $d(\cdot,\cdot)$ on $\Gamma$ invariant under the action of $\Z^d$. Given any $x\in\Z^d$ and any vertex $v\in V_\Gamma$, we can now define the $\Gamma$-norm $||x||_{\Gamma}$ to be given by
\beq
\label{defnorm}
||x||_{\Gamma} \equiv \lim_{n\ra\infty} \frac{d(v+nx,v)}{n}
\eeq
The existence of the limit is guaranteed by the triangle inequality and Fekete's Lemma~\cite{Fekete}. Taking the limit instead of defining $||x||_\Gamma$ to be $d(v,v+x)$ itself is necessary in order guarantee that $||x||_\Gamma$ does not depend on $v$. This well-definedness of\eq{defnorm} easily follows from arguments very similar to those made in the proof of Lemma~\ref{iotaind}. Moreover, it is simple to show that $||mx||_\Gamma=m\,||x||_\Gamma$ for any $m\in\N$. The norm $||x||_\Gamma$ can be interpreted as follows: for any $v\in\Gamma$, one needs to traverse $n\cdot ||x||_\Gamma+O(1)$ edges in order to get from $v$ to $v+xn$.

From the triangle inequality $||x+y||_\Gamma\leq ||x||_\Gamma+||y||_\Gamma$ and $||mx||_\Gamma=m\,||x||_\Gamma$ for $m\in\N$, one deduces that $||\cdot||_\Gamma$ extends to a unique norm on $\R^d$.

\begin{prop}
\label{lsgeometry}
Let $\Gamma$ be strongly connected. Then $P_\Gamma$ is the unit ball of $||\cdot||_\Gamma$.
\end{prop}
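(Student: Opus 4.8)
Write $K=\{x\in\R^d:||x||_\Gamma\le 1\}$ for the closed unit ball at issue; the plan is to prove the two inclusions $P_\Gamma\subseteq K$ and $K\subseteq P_\Gamma$ separately. First I would assemble the properties of $||\cdot||_\Gamma$ used below. Because $\Gamma$ is strongly connected, the graph distance $d$ is finite and $\Gamma/\Z^d$ is strongly connected as well (project paths), so Theorem~\ref{mainthm} is available. Since one edge changes the $\Z^d$-coordinate by at most $C:=\max_e||\delta(e)||$, while conversely $d(v,v+e_i)$ is bounded over the finitely many vertex orbits, one gets $\tfrac{1}{C}||x||\le||x||_\Gamma\le D\,||x||_1$ for a suitable constant $D$; hence $K$ is a convex body with the origin in its interior, so $K=\mathrm{conv}(\partial K)$. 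I also record $||y||_\Gamma\le d(v,v+y)$ for $y\in\Z^d$ (from subadditivity of $m\mapsto d(v,v+my)$ and the $\Z^d$-invariance of $d$) and $d(v,v+ny)/n\to||y||_\Gamma$, reading the limit in \eqref{defnorm} via the interpretation stated right after it, namely $d(v,v+nx)=n||x||_\Gamma+o(n)$. Finally $P_\Gamma$ is a polytope, so it is compact and convex.

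For $P_\Gamma\subseteq K$: since $K$ is closed and convex it suffices to place each basic velocity $u_c=\delta(c)/|c|$ in $K$. Write $\ell=|c|$, choose $v_0\in\phi^{-1}(s(c))$, and lift the closed path $c^m$ (the cycle $c$ traversed $m$ times) to a path in $\Gamma$ starting at $v_0$. Since the displacements are coherent with lifting (Lemma~\ref{lift}, \eqref{deltap}), this lift has length $m\ell$ and ends at the vertex $v_0+m\,\delta(c)$, whence $d(v_0,v_0+m\,\delta(c))\le m\ell$. Dividing by $m$ and letting $m\to\infty$ gives $||\delta(c)||_\Gamma\le\ell$, i.e.\ $||u_c||_\Gamma\le 1$; therefore $P_\Gamma=\mathrm{conv}\{u_c\}\subseteq K$.

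For $K\subseteq P_\Gamma$: I would realise every rational point of $\partial K$ as a velocity. Fix $x\in\Q^d\setminus\{0\}$ and write $x=y/m$ with $y\in\Z^d$, $m\in\N$. For each $M\in\N$ pick a shortest directed path in $\Gamma$ from $v$ to $v+My$ (it exists by strong connectivity); since $v$ and $v+My$ lie in the same $\Z^d$-orbit, its image under $\phi$ is a directed closed path $\bar\pi_M$ in $\Gamma/\Z^d$ with $|\bar\pi_M|=d(v,v+My)$, and, the chosen path being the lift of $\bar\pi_M$ from $v$ (uniqueness of lifts, Lemma~\ref{lift}), comparing its endpoint $v+\delta(\bar\pi_M)$ with $v+My$ and using freeness \eqref{free} gives $\delta(\bar\pi_M)=My$. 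Iterating $\bar\pi_M$ forever and lifting produces a trajectory of velocity $\delta(\bar\pi_M)/|\bar\pi_M|=My/d(v,v+My)$, which hence lies in $P_\Gamma$ by Theorem~\ref{mainthm}. Letting $M\to\infty$ and using $d(v,v+My)/M\to||y||_\Gamma$, this velocity tends to $y/||y||_\Gamma=x/||x||_\Gamma$; as $P_\Gamma$ is closed, $x/||x||_\Gamma\in P_\Gamma$. These points are dense in $\partial K$ (given $b$ with $||b||_\Gamma=1$, take rationals $x_j\to b$; then $x_j/||x_j||_\Gamma\to b$), so the closed convex set $P_\Gamma$ contains $\partial K$, hence also $\mathrm{conv}(\partial K)=K$. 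The two inclusions give $P_\Gamma=K$.

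I expect the obstacle to be organisational rather than conceptual. The one point needing care is conventions: on a directed graph $d$ is finite only because $\Gamma$ is strongly connected, and the limit in \eqref{defnorm} must be oriented so that $||x||_\Gamma$ is the asymptotic cost of moving \emph{from} $v$ \emph{towards} $v+nx$ — this orientation is exactly what makes the unit ball equal $P_\Gamma$ rather than its reflection $-P_\Gamma$. The remaining work is routine: extracting the standard facts about $||\cdot||_\Gamma$ (positive-definiteness, $0\in\mathrm{int}\,K$, $K=\mathrm{conv}(\partial K)$) from strong connectivity and finiteness of $\Gamma/\Z^d$, and carefully tracking orbit representatives under $\phi$ when lifting (closed) paths.
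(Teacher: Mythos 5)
Your proof is correct and follows essentially the same route as the paper's: cycles lift to short paths in $\Gamma$, giving $||u_c||_\Gamma\le 1$ for each basic velocity, and conversely shortest paths from $v$ to $v+My$ project to closed paths in $\Gamma/\Z^d$ whose periodic traversal realizes velocities converging to $y/||y||_\Gamma$. The only divergence is bookkeeping: you recover the interior of the ball from its boundary via density of rational directions and $K=\mathrm{conv}(\partial K)$ together with convexity of $P_\Gamma$, whereas the paper treats every $x/k$ with $||x||_\Gamma\le k$ directly through an $\eps$-estimate on its distance to $P_\Gamma$.
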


\begin{proof}
We first show that $P_\Gamma$ is contained in the unit ball. To this end, it is enough to prove $||u_c||_\Gamma\leq 1$ for a basic velocity $u_c$ associated to a cycle $c=e_1\ldots e_k$ in $\Gamma/\Z^d$. Choosing any $v\in\phi^{-1}(s(e_1))$, the cycle $c$ lifts to a path in $\Gamma$ from $v$ to $v+\delta(c)$. Since this path has length $k$, we get
$$
||u_c||_\Gamma = \left|\left| \frac{\delta(c)}{k} \right|\right|_\Gamma = \frac{1}{k} \lim_{n\ra\infty} \frac{d(v,v+n\delta(c))}{n} \leq \frac{1}{k} d(v,v+\delta(c)) \:.
$$
There is a path from $v$ to $v+\delta(c)$ of length $k$, which implies $d(v,v+\delta(c))\leq k$. This results in the desired inequality $||u_c||_\Gamma\leq 1$.

Conversely, it has to be shown that $||x||_\Gamma\leq k$ for $x\in\Z^d$ and $k\in\N$ implies that $\frac{x}{k}\in P_\Gamma$. Fix $\eps>0$. By assumption, there are $v\in\Gamma$ and $n\in\N$ such that
$$
k-\eps \leq \frac{d(v,v+nx)}{n} \leq k+\eps \:.
$$
This means that there exists a path $p$ from $v$ to $v+nx$ of length between $nk-n\eps$ and $nk+n\eps$. Since $t(p)$ is a translate of $s(p)$, the path $p$ can be concatenated with its own translates in order to form a trajectory which periodically traverses translates of $p$. Since $\delta(p)=nx$, the velocity of this trajectory is given by $nx/|p|$, so that
$$
\frac{x}{k} \cdot \frac{kn}{|p|} \in P_\Gamma \:.
$$
In terms of an arbitrary norm $||\cdot||$ on $\R^d$, the distance of $\frac{x}{k}$ to $P_\Gamma$ can therefore be bounded by
$$
\left|\left|\frac{x}{k} - \frac{x}{k} \cdot \frac{kn}{|p|} \right|\right| = \frac{||x||}{k} \cdot \frac{1}{|p|} \cdot \big||p|-nk \Big| \leq \frac{||x||}{k} \cdot \frac{\eps}{k-\eps} \:.
$$
Since this vanishes as $\eps\ra 0$, we conclude that $\frac{x}{k}\in P_\Gamma$ from closedness of $P_\Gamma$.
\end{proof}

\section{Properties of velocity polytopes} We now study some basic properties of velocity polytopes. 

\begin{prop}
\label{converse}
Every rational polytope arises as the velocity polytope $P_\Gamma$ of an appropriate periodic graph $\Gamma$.
\end{prop}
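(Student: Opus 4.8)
The plan is to realize a given rational polytope $P\subseteq\R^d$ as the convex hull of a prescribed finite set of rational points, and then to engineer a strongly connected displacement graph whose cycles have exactly these points as basic velocities; Theorem~\ref{mainthm} then identifies $P_\Gamma$ with $P$.

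First I would write $P=\mathrm{conv}\{v_1,\dots,v_m\}$ with $v_1,\dots,v_m\in\Q^d$ (for instance the vertices of $P$), and express each $v_j=p_j/q_j$ with $p_j\in\Z^d$ and $q_j\in\N$, $q_j\geq 1$. The displacement graph $G$ is built around a single \emph{hub} vertex $h$. For each $j$ I attach a directed cycle $C_j$ of length $q_j$ based at $h$: if $q_j=1$ this is a single loop $e_j$ at $h$ with $\delta(e_j)=p_j$; if $q_j\geq 2$ it runs through $q_j-1$ fresh auxiliary vertices, $h\to u_{j,1}\to\dots\to u_{j,q_j-1}\to h$, with displacement $p_j$ on the first edge and $0$ on all the others. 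In either case $\delta(C_j)=p_j$ and $|C_j|=q_j$, so the basic velocity of $C_j$ is $u_{C_j}=v_j$. Let $\Gamma=\widetilde{G}$ be the periodic graph associated to $(G,\delta)$ as in Definition~\ref{depgraph}; its quotient $\Gamma/\Z^d$ is $G$ by construction.

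The quotient $G$ is strongly connected: from $h$ one reaches any auxiliary vertex $u_{j,i}$ by following $C_j$, and from any $u_{j,i}$ one returns to $h$ by continuing around $C_j$. The crucial point is that the $C_j$ are \emph{all} the cycles of $G$, up to a cyclic shift of base point. Indeed, every auxiliary vertex has out-degree $1$, so a path that enters an arc $C_j$ is forced to traverse all of $C_j$ before it can do anything else; in particular it must pass through $h$. Hence any cycle that meets $h$ is exactly one full traversal of some $C_j$, and any cycle that avoids $h$ is impossible, since a path starting at an auxiliary vertex is forced to reach $h$. The degenerate case $q_j=1$ is handled by the same argument: a loop is a cycle of length $1$, and it cannot be part of a longer cycle. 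Consequently the set of basic velocities of $\Gamma/\Z^d$ is precisely $\{v_1,\dots,v_m\}$, and Theorem~\ref{mainthm} gives $P_\Gamma=\mathrm{conv}\{v_1,\dots,v_m\}=P$.

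The only step requiring genuine care is the verification that $G$ has no spurious cycles: without it one could pick up extra basic velocities outside $P$ by combining fragments of different $C_j$'s through the hub. This is exactly what the out-degree-$1$ design of the auxiliary vertices rules out, and everything else in the argument is routine bookkeeping.
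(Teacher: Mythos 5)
Your proof is correct and follows essentially the same strategy as the paper: build an explicit strongly connected displacement graph whose only cycles (up to cyclic shift of base point) realize the prescribed rational vertices as basic velocities, and then invoke Theorem~\ref{mainthm}. The only difference is the gadget --- you attach $m$ cycles of lengths $q_j$ to a single hub, whereas the paper uses one shared spine of length $\gamma-1$ (with $\gamma$ a common denominator of all vertices) closed off by $m$ parallel edges --- and in both constructions it is the out-degree-one auxiliary vertices that rule out spurious cycles.
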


\begin{proof}
Let $P\subseteq\R^d$ be a non-empty polytope with rational vertices $w_1,\ldots,w_m\in\Q^d$. Let $\gamma$ be the least common multiple of the denominators of the $w_i$, so that $\gamma w_i \in\Z^d$ for all $i$. Then $P$ can be realized as a velocity polytope as follows: let us construct a displacement graph on vertices $u_1,\ldots,u_{\gamma}$ such that there is a single edge with displacement $\delta=0$ from the vertex $u_j$ to the vertex $u_{j+1}$ for each $j=1,\ldots,\gamma-1$, and additional edges $e_1,\ldots,e_m$ from $u_\gamma$ to $u_1$ with displacements
$$
\delta(e_i)\equiv w_i \:.
$$
This defines a strongly connected displacement graph. Its basic velocities are precisely all the $\frac{w_i}{\gamma}$. Therefore, the associated periodic graph has $P$ as its velocity polytope.
\end{proof}

In this paper, we have been considering the general case of directed graphs; since every undirected graph can be made into a directed graph by replacing an undirected edge by a pair of oppositely oriented directed edges, the theory also applies to undirected graphs. In an undirected graph, every path can be reversed, which reverses the sign of its velocity. Therefore, it should not be surprising that the following holds:

\begin{prop}
When $\Gamma$ is undirected, then $P_{\Gamma}$ is symmetric around the origin.
\end{prop}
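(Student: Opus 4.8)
The plan is to exploit the basic symmetry of an undirected graph: when it is turned into a directed graph as described above, every edge $e$ comes paired with a reverse edge $\bar e$ satisfying $s(\bar e)=t(e)$ and $t(\bar e)=s(e)$, and since the $\Z^d$-action is compatible with this pairing, it descends to the displacement graph $\Gamma/\Z^d$. First I would record that, for any chosen orbit representatives $\iota$, paired edges have opposite displacements, $\delta(\bar e)=-\delta(e)$. This is immediate from the construction of $\delta$: reversing the lift of $\bar e$ against that of $e$ turns the isomorphism $\gamma_e$ of\eq{defgamma} into $x\mapsto x-\gamma_e(0)$, so $\delta(\bar e)=-\gamma_e(0)=-\delta(e)$. (By Lemma~\ref{gaugeinv} basic velocities are in any case independent of $\iota$, so this suffices.)

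The key step is that edge-reversal sends cycles to cycles. Given a cycle $c=e_1\ldots e_n$ in $\Gamma/\Z^d$, put $\bar c\equiv\bar e_n\bar e_{n-1}\ldots\bar e_1$. A quick check shows $\bar c$ is a closed path traversing the same $n$ distinct vertices as $c$, so it is a cycle with $|\bar c|=|c|=n$; and $\delta(\bar c)=\sum_i\delta(\bar e_i)=-\sum_i\delta(e_i)=-\delta(c)$, whence the associated basic velocity is $u_{\bar c}=\delta(\bar c)/|\bar c|=-u_c$. Thus the finite set of basic velocities of $\Gamma/\Z^d$ is invariant under $u\mapsto -u$.

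It then remains to transfer this to $P_\Gamma$. If $\Gamma/\Z^d$ is strongly connected (automatic once the underlying undirected graph is connected), Theorem~\ref{mainthm} identifies $P_\Gamma$ with the convex hull of the basic velocities, and the convex hull of a centrally symmetric finite set is centrally symmetric, so $P_\Gamma=-P_\Gamma$. In general, the connected components of the undirected graph yield the strongly connected components $\Gamma_i/\Z^d$ of $\Gamma/\Z^d$, edge-reversal maps each component to itself, so each $P_{\Gamma_i}$ is centrally symmetric by the previous case, and by Proposition~\ref{nonconn} the set of velocities $P_\Gamma=P_{\Gamma_1}\cup\cdots\cup P_{\Gamma_c}$ is a union of centrally symmetric sets, hence centrally symmetric.

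I do not anticipate a real obstacle: the only thing requiring a little care is the bookkeeping that $\bar c$ is genuinely a cycle with no spurious vertex repetitions, and that $\delta(\bar e)=-\delta(e)$ holds for the fixed trivialization — both routine once unwound.
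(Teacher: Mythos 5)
Your proposal is correct and follows essentially the same route as the paper's proof: pair each edge with its reverse, observe that displacements negate, that reversal sends cycles to cycles with negated basic velocity, and then invoke Theorem~\ref{mainthm} and Proposition~\ref{nonconn}. The only difference is that you spell out the bookkeeping (that $\delta(\bar e)=-\delta(e)$ and that $\bar c$ is genuinely a cycle) which the paper leaves implicit.
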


\begin{proof}
The edges of $\Gamma$ come in parallel pairs of opposite orientation. Therefore, the edges of $\Gamma/\Z^d$ also come in parallel pairs with opposite orientation and displacement of the opposite sign. Hence for every cycle $c$ in $\Gamma/\Z^d$, there is a cycle $c'$ which corresponds to traversing $c$ backwards by using all the ``partner'' edges. Therefore, if $u_c$ is a basic velocity, then so is $-u_c$. Now the statement follows from Theorem~\ref{mainthm} and Proposition~\ref{nonconn}.
\end{proof}

\begin{prop}[Connectedness]
\label{connectedness}
If $\Gamma$ itself is strongly connected, then $P_{\Gamma}\subseteq\R^d$ is full-dimensional and $0\in P_{\Gamma}$ is an interior point.
\end{prop}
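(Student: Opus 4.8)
The plan is to locate, inside the convex set $P_\Gamma$, an explicit full-dimensional polytope centred at the origin. The mechanism is that strong connectivity of $\Gamma$ \emph{itself}, as opposed to merely of $\Gamma/\Z^d$, allows every lattice vector $x\in\Z^d$ to be realized as the displacement of a closed path in $\Gamma/\Z^d$, and a closed path of displacement $x$ produces a trajectory with velocity proportional to $x$.

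First I would record that strong connectivity passes from $\Gamma$ to $\Gamma/\Z^d$: the projection $\phi$ is surjective on vertices and sends paths to paths, so a path in $\Gamma$ joining chosen lifts of two vertices of $\Gamma/\Z^d$ projects to a path joining those vertices. Hence Theorem~\ref{mainthm} applies and the set of velocities of $\Gamma$ equals $P_\Gamma$. Now fix a vertex $v_0$ of $\Gamma/\Z^d$ and let $x\in\Z^d$ be arbitrary. Since $\iota(v_0)$ and $\iota(v_0)+x$ project to the same vertex $v_0$, strong connectivity of $\Gamma$ gives a path $p^*$ in $\Gamma$ from $\iota(v_0)$ to $\iota(v_0)+x$; its projection $p\equiv\phi(p^*)$ is then a closed path at $v_0$. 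By uniqueness of lifts (Lemma~\ref{lift}), $p^*$ is the lift of $p$ starting at $\iota(v_0)$, so it ends at $\iota(v_0)+\delta(p)$; comparing with $\iota(v_0)+x$ and using freeness\eq{free} yields $\delta(p)=x$. Concatenating $p^*$ with its translates $p^*+x,\,p^*+2x,\ldots$ gives a trajectory whose displacement after $m$ copies of $p^*$ is $mx$ over $m|p|$ timesteps, so the trajectory has well-defined velocity $x/|p|$, which therefore lies in $P_\Gamma$.

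Applying this with $x$ ranging over $\pm e_1,\ldots,\pm e_d$, the standard basis vectors of $\Z^d$ and their negatives, and noting that $e_j\neq 0$ forces the relevant paths to be nonempty, I obtain velocities $t_j^+e_j\in P_\Gamma$ and $-t_j^-e_j\in P_\Gamma$ with all $t_j^+,t_j^->0$. Put $s\equiv\min_j\min(t_j^+,t_j^-)>0$. By convexity of $P_\Gamma$, the segment from $-t_j^-e_j$ to $t_j^+e_j$ lies in $P_\Gamma$ and contains $\pm s\,e_j$; hence $\pm s\,e_j\in P_\Gamma$ for each $j$, so $P_\Gamma$ contains $\mathrm{conv}\{\pm s\,e_j:j=1,\ldots,d\}=\{y\in\R^d:||y||_1\leq s\}$. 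This cross-polytope is full-dimensional and contains the origin in its interior, and therefore so does $P_\Gamma$.

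I do not expect a genuine obstacle; the two points calling for care are the bookkeeping that the projection of a path in $\Gamma$ has displacement equal to the net $\Z^d$-shift between its endpoints (immediate from\eq{deltap} and uniqueness of lifts), and keeping the hypothesis sharp, since the weaker assumption that only $\Gamma/\Z^d$ is strongly connected does not suffice: the one-way integer line (one vertex, one loop of displacement $1$) has $\Gamma/\Z^d$ strongly connected yet its velocity polytope is the single point $\{1\}$. An alternative route avoids the explicit construction by invoking Proposition~\ref{lsgeometry}: there $P_\Gamma$ is the unit ball of $||\cdot||_\Gamma$, and since any path in $\Gamma$ from $v$ to $v+nx$ has displacement $nx$ and hence length at least $||nx||/C$ with $C=\max_e||\delta(e)||$, one has $||x||_\Gamma\geq||x||/C>0$ for $x\neq 0$; thus $||\cdot||_\Gamma$ is a genuine norm and its unit ball is automatically full-dimensional with the origin in its interior.
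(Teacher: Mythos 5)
Your proposal is correct and follows essentially the same route as the paper: use strong connectivity of $\Gamma$ to find paths from a fixed vertex $v$ to $v\pm\epsilon_i$, project them to closed paths in $\Gamma/\Z^d$ of displacement $\pm\epsilon_i$, concatenate translates to obtain trajectories with velocities $\pm\epsilon_i/|p_i^{\pm}|$, and observe that the convex hull of these $2d$ points is full-dimensional with the origin in its interior. Your extra bookkeeping (verifying $\delta(p)=x$ via uniqueness of lifts, the cross-polytope, and the alternative argument via Proposition~\ref{lsgeometry}) only adds detail to the same idea.
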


\begin{proof}
This is best proven without appealing to Theorem~\ref{mainthm}. Let $\epsilon_1,\ldots,\epsilon_d\in\Z^d$ be the standard unit vectors, and $v\in\Gamma$ a fixed starting vertex. Then by the assumption of strong connectedness of $\Gamma$, there is a path $p_i^+$ in $\Gamma$ from $v$ to $v+\epsilon_i$. In $\Gamma/\Z^d$, this is a closed path of displacement $\epsilon_i$ and velocity $\frac{\epsilon_i}{|p_i^+|}$. We can use translates of $p_i^+$ to connect $v+n\epsilon_i$ to $v+(n+1)\epsilon_i$ for any $n\in\N$. Sequentially traversing these translates of $p_i^+$ defines a trajectory with velocity $\frac{\epsilon_i}{|p_i^+|}$. Similarly, there exists a path $p_i^-$ from $v$ to $v-\epsilon_i$, which gives rise to a trajectory with velocity $-\frac{\epsilon_i}{|p_i^-|}$. The convex hull of these $2d$ velocities is a subset of $P_\Gamma$. By construction, this subset is full-dimensional and includes the origin as an interior point, and so the same also holds for $P_{\Gamma}$.
\end{proof}

The converse is not true: for example, in dimension $d=1$ we can take $\Gamma/\Z^d$ to consist of a single vertex with a loop $e_+$ of displacement $\delta(e_+)=2$ and a loop $e_-$ of displacement $\delta(e_-)=-2$, which makes $\Gamma/\Z^d$ strongly connected and gives $P_{\Gamma}=[-2,2]$, although the associated periodic graph $\Gamma$ has two connected components.

So far, we have been talking about the velocity polytope of a single periodic graph $\Gamma$. But given two periodic graphs $\Gamma$ and $\Gamma'$, how do their velocity polytopes relate? In order to find some relation between $P_\Gamma$ and $P_{\Gamma'}$, one needs to assume a relation between $\Gamma$ and $\Gamma'$. One such notion is that of a \emph{morphism} $h:\Gamma\to\Gamma'$ between periodic graphs $\Gamma$ and $\Gamma'$ of the same dimension $d$, by which we mean maps
$$
h_V \: : \: V_\Gamma \longrightarrow V_{\Gamma'} \:,\qquad h_E \: : \: E_\Gamma \longrightarrow E_{\Gamma'} \:,
$$
which are compatible with the graph structures,
$$
s_{\Gamma'} \circ h_E = h_V \circ s_\Gamma \:,\qquad t_{\Gamma'} \circ h_E = h_V \circ t_\Gamma \:,
$$
and the $\Z^d$-action,
\beq
\label{morphaction}
h_V(v+x) = h_V(v) + x \:,\quad h_E(e+x) = h_E(e) + x \:.
\eeq
A morphism of periodic graphs induces a simple relationship between the velocity polytopes:

\begin{prop}[Functoriality]
Let $h:\Gamma\to\Gamma'$ be a morphism of periodic graphs. Then
$$
P_{\Gamma} \subseteq P_{\Gamma'} \:.
$$
\end{prop}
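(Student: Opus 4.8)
The plan is to show that the morphism $h$ sends trajectories to trajectories \emph{without changing their velocity}; since by Proposition~\ref{nonconn} the set $P_\Gamma$ is precisely the set of velocities of $\Gamma$ (and likewise $P_{\Gamma'}$ for $\Gamma'$), this gives $P_\Gamma\subseteq P_{\Gamma'}$ at once.

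First I would note that $h$ descends to a morphism of quotient graphs. Because $h_V$ and $h_E$ commute with the $\Z^d$-action by\eq{morphaction}, they carry $\Z^d$-orbits to $\Z^d$-orbits and hence induce maps $\bar h:V_{\Gamma/\Z^d}\to V_{\Gamma'/\Z^d}$ and $\bar h:E_{\Gamma/\Z^d}\to E_{\Gamma'/\Z^d}$; lifting an edge orbit, applying $h_E$, and projecting again shows $\bar h$ is compatible with source and target. In particular, if $(f_n)_{n\in\N}$ is a trajectory in $\Gamma$, then $(h_E(f_n))_{n\in\N}$ is a trajectory in $\Gamma'$, since $s_{\Gamma'}(h_E(f_{n+1}))=h_V(s_\Gamma(f_{n+1}))=h_V(t_\Gamma(f_n))=t_{\Gamma'}(h_E(f_n))$.

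The one point requiring care is that $h$ need not map the chosen orbit representatives $\iota$ of $\Gamma$ to the chosen representatives $\iota'$ of $\Gamma'$ (indeed $\bar h$ need be neither injective nor surjective), so the displacement functions are intertwined only up to a coboundary. Concretely, writing $h_V(\iota(v))=\iota'(\bar h(v))+g(v)$ defines a function $g:V_{\Gamma/\Z^d}\to\Z^d$, and lifting an edge orbit $e$ at $\iota(s(e))$, applying $h_E$, and reading off coordinates yields $\delta'(\bar h(e))=\delta(e)+g(t(e))-g(s(e))$ — exactly the effect of a gauge transformation as in Section~\ref{dissec}. Hence, writing $\bar f_k=\phi(f_k)$ for the edge orbits traversed by a trajectory $f$,
$$
\sum_{k=1}^n \delta'\bigl(\bar h(\bar f_k)\bigr)=\sum_{k=1}^n \delta(\bar f_k)+\sum_{k=1}^n\bigl[g(t(\bar f_k))-g(s(\bar f_k))\bigr],
$$
and since $s(\bar f_{k+1})=t(\bar f_k)$ the last sum telescopes to the bounded term $g(t(\bar f_n))-g(s(\bar f_1))$. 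This is the same mechanism used in the proof of Lemma~\ref{iotaind}: dividing by $n$ and letting $n\to\infty$, the velocity of $h(f)$ exists exactly when that of $f$ does, and the two agree. Therefore every velocity of $\Gamma$ is a velocity of $\Gamma'$, so the set of velocities of $\Gamma$ is contained in that of $\Gamma'$, i.e.\ $P_\Gamma\subseteq P_{\Gamma'}$ by Proposition~\ref{nonconn}.

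The only mildly delicate step is the bookkeeping with the gauge function $g$ in the previous paragraph; everything else is a direct unwinding of definitions. As a sanity check in a special case, if one assumes $\Gamma$ and $\Gamma'$ strongly connected one can bypass trajectories entirely: $h$ maps a path in $\Gamma$ to a path of the same length in $\Gamma'$ and satisfies $h_V(v+nx)=h_V(v)+nx$, so $||x||_{\Gamma'}\le||x||_\Gamma$ for all $x\in\Z^d$, whence the unit ball of $||\cdot||_\Gamma$ is contained in that of $||\cdot||_{\Gamma'}$ and Proposition~\ref{lsgeometry} gives $P_\Gamma\subseteq P_{\Gamma'}$.
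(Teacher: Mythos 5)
Your proof is correct and follows essentially the same route as the paper: push the trajectory forward through the induced map on quotient graphs and check that its velocity is unchanged, then invoke Proposition~\ref{nonconn}. The one difference is that where the paper asserts the orbit representatives can be chosen so that $\delta'(\widehat{h}(e))=\delta(e)$ exactly --- a claim that is delicate precisely when the induced map on vertex orbits fails to be injective --- you allow a coboundary discrepancy $g$ and let it telescope away as in Lemma~\ref{iotaind}, which is the more robust bookkeeping; your closing remark via $||\cdot||_{\Gamma}$ and Proposition~\ref{lsgeometry} is a valid independent check in the strongly connected case.
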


\begin{proof}
We consider how $h$ operates on the quotient graphs $\Gamma/\Z^d$ and $\Gamma'/\Z^d$. Since by\eq{morphaction} the assignment $h_V$ maps $\Z^d$-orbits to $\Z^d$-orbits, we get induced maps $\widehat{h}_V:V_{\Gamma/\Z^d}\to V_{\Gamma'/\Z^d}$ and $\widehat{h}_E:E_{\Gamma/\Z^d}\to E_{\Gamma'/\Z^d}$. Moreover, the orbit representatives $\iota:V_{\Gamma/Z^d}\ra V_\Gamma$ and $\iota':V_{\Gamma'/\Z^d}\ra V_{\Gamma'}$ can be chosen such that $h_V$ maps representatives to representatives in the sense that $h_V\circ\iota = \iota'\circ \widehat{h}_V$. In total, there is a commutative diagram
$$
\xymatrix{ \Gamma/\Z^d \ar[r]^{\widehat{h}_V} \ar[d]_\iota & \Gamma'/\Z^d \ar[d]^{\iota'} \\
\Gamma \ar[r]^{h_V} \ar[d]_\phi & \Gamma \ar[d]^{\phi'} \\
\Gamma/\Z^d \ar[r]^{\widehat{h}_V} & \Gamma/\Z^d }
$$
such that the vertical compositions are identities. Then it follows that the induced map $\widehat{h}_E:E_{\Gamma/\Z^d}\to E_{\Gamma'/\Z^d}$ is compatible with the displacements, $\delta'(\widehat{h}(e)) = \delta(e)$ for all edges $e\in E_{\Gamma/\Z^d}$.

To prove the assertion, we now show that every velocity of $\Gamma$, associated to a trajectory $(f_n)_{n\in\N}$, is also a velocity in $\Gamma'$. But this follows from
$$
\lim_{n\ra\infty}\frac{\sum_{k=1}^n \delta(f_k)}{n} = \frac{\sum_{k=1}^n \delta'(\widehat{h}(f_k))}{n} \:,
$$
so that $\left(h(f_n)\right)_{n\in\N}$ is a trajectory in $\Gamma'$ with the same velocity.
\end{proof}

This result can be applied for example as follows:

\begin{cor}
\label{morphapp}
If $P_{\Gamma}\not\subseteq P_{\Gamma'}$, then there is no morphism $h:\Gamma\ra\Gamma'$.
\end{cor}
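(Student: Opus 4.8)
The plan is essentially a one-line argument: this corollary is nothing more than the contrapositive of the Functoriality proposition proved immediately above. First I would suppose, for contradiction, that a morphism $h:\Gamma\to\Gamma'$ does exist. Then the Functoriality proposition applies verbatim and yields $P_\Gamma\subseteq P_{\Gamma'}$, contradicting the standing hypothesis $P_\Gamma\not\subseteq P_{\Gamma'}$. Hence no such $h$ can exist. Equivalently, phrased without contradiction: the assignment $\Gamma\mapsto P_\Gamma$ is monotone with respect to the existence of morphisms — a morphism forces an inclusion of polytopes — so the failure of that inclusion immediately rules out the morphism.

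There is no real obstacle here; the entire mathematical content has already been discharged in the proof of Functoriality, and the corollary merely records its practical upshot, namely that the velocity polytope, viewed as an invariant of periodic graphs, serves as an obstruction to the existence of translation-invariant maps. The one point worth keeping in mind is purely a matter of hypotheses: ``morphism'' is to be understood in the precise sense fixed just before the Functoriality proposition (maps $h_V,h_E$ compatible with source, target, and the $\Z^d$-action), and $\Gamma,\Gamma'$ are taken to have the same dimension $d$, so that $P_\Gamma$ and $P_{\Gamma'}$ both sit inside $\R^d$ and the inclusion $P_\Gamma\subseteq P_{\Gamma'}$ is meaningful. No strong-connectivity assumption is required, since Functoriality itself does not use one. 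So the proof reduces to invoking the previous proposition in contrapositive form, and I would write it in a single sentence.
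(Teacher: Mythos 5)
Your proof is correct and matches the paper's intent exactly: the corollary is stated without proof precisely because it is the contrapositive of the Functoriality proposition, which is all your argument uses. Nothing further is needed.
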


\section{A no-go theorem for digital physics}
\label{physics}

Many recent proposals in fundamental physics revolve around the idea that space is, in some sense, discrete~\cite{A1,A2,BHmodel,Qgraphity,tHooft}. This is motivated by the conviction that the unification of quantum theory with general relativity will require the introduction of a minimal length scale~\cite{Garay}. Building theories of physics in which space is fundamentally discrete is sometimes also known as \emph{digital physics}~\cite{Fredkin}. If the idea of digital physics is correct, then one needs to ask: what will be the observational consequences of this fundamental discreteness? The \emph{tile argument} due to Hermann Weyl~\cite[p.~43]{Weyl} suggests that such fundamental discreteness can manifest itself on the macroscopic scale:

\begin{quote}
``If a square is built up of miniature tiles, then there are as many tiles along the diagonal as there are along the side; thus the diagonal should be equal in length to the side.''
\end{quote}

We refer to~\cite{VB} for an extensive discussion and references on the tile argument. Among its weaknesses are the lack of a precise definition of ``length'' (length of what?) and that it only applies to a square lattice embedded in Euclidean space. In particular, the tile argument does not answer the question whether there could be discrete models of physics for which the continuum limit corresponds to ordinary (non-relativistic) physics in Euclidean space.

We now explain the sense in which our Theorem~\ref{mainthm} can be interpreted as answering this question in the negative. The kind of discrete models that we consider are classical point particles on a periodic graph $\Gamma$. While more general frameworks for discrete models of physics are certainly conceivable, our present results are limited to this case. A classical point particle on $\Gamma$ has a \emph{trajectory} as in Definition~\ref{deftraj}. (It is not a strong restriction to assume that the particle hops along precisely one edge per timestep. More complicated models in which e.g.~different edges behave in different ways will often be equivalent to a different model satisfying our assumptions. For example, if there is a certain kind of edge that takes two timesteps to traverse, just subdivide each such edge into two ``ordinary'' edges.)

Now, as Remark~\ref{physics} shows, the \emph{velocity} of a trajectory in the sense of~(\ref{vel}) corresponds to the usual intuitive notion of velocity, possibly up to an affine transformation. We would like to emphasize that our definition~(\ref{vel}) does neither require an embedding of the periodic graph $\Gamma$ into Euclidean space, nor does it presuppose a notion of distance on $\Gamma$. Both of these properties are features that one should expect a reasonable self-contained discrete model of physics to have. More concretely, although we can talk about velocities, we cannot talk about the \emph{speed} of a trajectory ($=$magnitude of its velocity), due to the absence of a meaningful notion of distance or length.

If a discrete model is to recover the usual Euclidean space in the continuum limit, then it also needs to recover its symmetries. In particular, the set of possible velocities of particles should be a ball of a certain radius, in accordance with the perceived \emph{isotropy} of space: all directions look the same, and in particular the maximal speed of a particle does not depend on the direction of its velocity. In our framework, this corresponds to the requirement that the set of possible velocities should be \emph{ellipsoidal} in shape, so that an appropriate linear transformation maps the ellipsoid into a ball. Differently phrased, the set of velocities itself should determine a Euclidean metric such that the set turns into a ball when using that dynamically determined metric. By Proposition~\ref{lsgeometry}, this (hypothetically) Euclidean metric should coincide with the metric induced from geodesic distance on $\Gamma$.

However, our Theorem~\ref{mainthm} (more generally, Proposition~\ref{nonconn}) implies that the set of velocities of particles on a periodic graph can never be ellipsoidal in shape. Alternatively speaking, the large-scale geometry of a periodic graph is never Euclidean with respect to any metric. Therefore, we arrive at the following \emph{no-go theorem}:

\begin{thm}
\label{nogo}
There is no periodic graph $\Gamma$ for which the set of macroscopic velocities achievable by a classical point particle hopping along the edges is isotropic.
\end{thm}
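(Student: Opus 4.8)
The plan is to reduce the statement to the elementary geometric fact that a convex polytope is never a round ball, using Theorem~\ref{mainthm} to identify the velocity set. First I would pin down what ``isotropic'' should mean. Rotations act on velocity vectors, so isotropy of the velocity set $P_\Gamma\subseteq\R^d$ is naturally expressed as the existence of an invertible linear map $T:\R^d\to\R^d$ such that $T(P_\Gamma)$ is a Euclidean ball centred at the origin --- and, if the model is to reproduce genuine $d$-dimensional Euclidean space, a ball of \emph{positive} radius. By Proposition~\ref{lsgeometry} this $T$ is precisely the change of coordinates that turns the geodesic-distance norm $||\cdot||_{\Gamma}$ into the Euclidean norm, so this is the right notion. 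Note that the hypothesis $d\geq 2$ is needed here: for $d=1$ the rotation group is trivial, so every interval $[-a,a]$ is vacuously ``isotropic''.

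Next I would invoke Theorem~\ref{mainthm}, together with Proposition~\ref{nonconn} in case $\Gamma/\Z^d$ is not strongly connected, to conclude that the set of velocities of $\Gamma$ is a finite union of convex polytopes; in particular it is bounded, since it contains $0$ (the velocity of any constant trajectory) and each polytope is bounded. Now suppose, towards a contradiction, that $P_\Gamma$ is isotropic, so that $T(P_\Gamma)$ is a ball; since a ball is convex, $P_\Gamma$ is convex. A convex set which is a finite union of polytopes is itself a polytope --- indeed it is the convex hull of the finitely many vertices of the polytopes in the union --- so $P_\Gamma$, and hence $T(P_\Gamma)$, has only finitely many extreme points. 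But a Euclidean ball of positive radius has a whole sphere of extreme points, a contradiction. (Without excluding the trivial case, the same argument shows that the only ``isotropic'' possibility is $P_\Gamma=\{0\}$, i.e.\ no particle moves macroscopically at all; when $\Gamma$ itself is strongly connected this case is already ruled out by Proposition~\ref{connectedness}, which forces $P_\Gamma$ to be full-dimensional.)

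I expect the only real difficulty to be expository rather than mathematical: once Theorem~\ref{mainthm} is in hand, ``the velocity set is a polytope, hence not a ball'' is immediate, so the work lies in choosing a definition of isotropy that is faithful to the physical picture yet admits no spurious counterexamples, and in being explicit that the degenerate zero-dimensional case and the dimension-one case are genuine exceptions handled by the hypotheses. A secondary bookkeeping point is the observation that a convex finite union of polytopes is a polytope, which is routine but should be stated.
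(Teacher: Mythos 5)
Your proposal is correct and follows essentially the same route as the paper, which likewise derives the theorem directly from Theorem~\ref{mainthm} and Proposition~\ref{nonconn} via the observation that a (finite union of) polytope(s) can never be ellipsoidal. You are somewhat more explicit than the paper about formalizing ``isotropic'' and about the degenerate cases $d=1$ and $P_\Gamma=\{0\}$, which is a reasonable expository refinement rather than a different argument.
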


In particular, the anisotropies present in $\Gamma$ will never be suppressed in the continuum limit, as one might na{\"i}vely expect. On the other hand, our Proposition~\ref{converse} shows that the remaining macroscopic anisotropies can be made as small as desired. The price that has to be paid consists in having to work with graphs with many edges per unit cell.

Our present analysis is limited to \emph{classical} point particles, which is not very realistic since actual particles are governed by \emph{quantum} mechanics. We strongly suspect that a statement analogous to Theorem~\ref{nogo} will also hold for quantum particles; for example, anisotropies are visible for quantum particles on hexagonal lattices at large enough momenta, an effect known as \emph{trigonal warping}~\cite{tw}.

\begin{prob}
State and prove a version of Theorem~\ref{nogo} for a quantum-mechanical particle on a periodic graph.
\end{prob}

We suspect that this is closely related to the behavior of electrons in periodic potentials, one of the main topics studied in solid-state physics.

\bibliographystyle{plain}
\bibliography{velocity_polytopes}

\end{document}